\providecommand{\U}[1]{\protect\rule{.1in}{.1in}}
\newtheorem{theorem}{Theorem}
\newtheorem{algorithm}[theorem]{Algorithm}
\newtheorem{proposition}[theorem]{Proposition}
\newenvironment{proof}[1][Proof]{\noindent\textbf{#1.} }{\ \rule{0.5em}{0.5em}}
\begin{document}

\title{\textbf{Wald type and Phi-divergence based test-statistics\ for isotonic
binomial proportions}}
\author{Martin, N.$^{1}$, Mata, R.$^{2}$ and Pardo, L.$^{2}$\\$^{1}${\small Dep. Statistics, Carlos III University of Madrid, 28903 Getafe
(Madrid), Spain}\\$^{2}${\small Dep. Statistics and O.R., Complutense University of Madrid,
28040 Madrid, Spain} }
\date{\today}
\maketitle

\begin{abstract}
In this paper new test statistics are introduced and studied for the important
problem of testing hypothesis that involves inequality constraint on
proportions when the sample comes from independent binomial random variables:
Wald type and phi-divergence based test-statistics. As a particular case of
phi-divergence based test-statistics, the classical likelihood ratio test is
considered. An illustrative example is given and the performance of all of
them for small and moderate sample sizes is analyzed in an extensive
simulation study.

\end{abstract}

\bigskip

\noindent\emph{Keywords and phrases}\textbf{:} Wald-type statistics,
Phi-divergence statistics, Inequality constrains, Loglinear model, Logistic regression.

\section{Introduction\label{Sec1}}

Ordinal categorical data appear frequently in the biomedical research
literature, for example, in the analysis of $I$ independent binary random
variables related to an increasing ordered categorical variable. It is
well-known that for such data it is not possible to use the classical
test-statistics such as chi-square or likelihood ratio with chi-squared
asymptotic distribution, but there exist appropriate order-restricted
test-statistics with chi-squared-bar asymptotic distribution. To illustrate
this problem a modification of an example given in Silvapulle and Sen (2005)
is considered in this introductory section.%

\begin{table}[htbp]  \tabcolsep2.8pt  \centering
\begin{tabular}
[c]{||clc||crr||crc|crc||}\hline
\multicolumn{3}{||c||}{$i$} & \multicolumn{3}{c||}{$n_{i}$} &
\multicolumn{3}{c|}{$n_{i1}$} & \multicolumn{3}{|c||}{$n_{i2}$}\\\hline
\hspace*{1cm} & $1$ & \hspace*{1cm} & \hspace*{1cm} & $17114$ & \hspace*{1cm}
& \hspace*{1cm} & $48$ & \hspace*{1cm} & \hspace*{1cm} & $17066$ &
\hspace*{1cm}\\
& $2$ &  &  & $14502$ &  &  & $38$ &  &  & $14464$ & \\
& $3$ &  &  & $793$ &  &  & $5$ &  &  & $788$ & \\
& $4$ &  &  & $165$ &  &  & $2$ &  &  & $163$ & \\\hline
\end{tabular}
$\ \ \ \ \ \ \ \ \ \ \ \ $%
\caption{Number of individuals with (j=1) and without (j=0) congenital sex-organ malformation cross-classified according to the maternal alcohol consumption level (i=1,2,3,4).\label{tt1}}%
\end{table}%

Table \ref{tt1} contains a subset of data from a prospective study of maternal
drinking and congenital malformations. Women completed a questionnaire, early
in their pregnancy, concerning alcohol use in the first trimester; complete
data and details are available elsewhere (Graubard and Korn, 1987).
Specifically, women were asked what was the amount of alcohol taken during the
first three months of their pregnancy and four categories of drink doses are
considered ($I=4$), no alcohol consumption ($i=1$), average number of
alcoholic drinks per day less than one but greater than zero ($i=2$),\ one or
more and less than three alcoholic drinks per day ($i=3$) and three or more
alcoholic drinks per day ($i=4$). In terms of a binary random variable with
$n_{i}$ individuals in total (see the second column in Table \ref{tt1}) with
independent behavior with respect to having congenital malformations, the
individuals not having congenital malformations are considered to be
unsuccessful ($j=2$, see the last column in Table \ref{tt1}) and successful
otherwise ($j=1$, see the third column in Table \ref{tt1}). Let $\pi_{i}$ be
the probability of a success associated with the $i$-th alcohol dose. Let us
consider some statistical inference questions that may arise in this example
and in similar ones with binomial probabilities.

\begin{enumerate}
\item Is there any evidence of maternal alcohol consumption being related to
malformation of sex organ? To answer this question, the null and alternative
hypotheses may be formulated as
\[
H_{0}:\pi_{1}=\pi_{2}=\pi_{3}=\pi_{4}\text{ vs. }H_{1}:\pi_{1},\pi_{2},\pi
_{3},\pi_{4}\text{ are not all equal,}%
\]
respectively. However, this formulation is unlikely to be appropriate because
the main issue of interest is the possible increase in the probability of
malformation as alcohol consumption increases.

\item Is there any evidence that an increase in maternal alcohol consumption
is associated with an increase in the probability of malformation?. This
question, as it stands, is quite broad to give a precise formulation of the
null and the alternative hypotheses. One possibility is to formulate the
problem in the following way,%
\begin{equation}
H_{0}:\pi_{1}=\pi_{2}=\pi_{3}=\pi_{4}\text{ vs. }H_{1}:\pi_{1}\leq\pi_{2}%
\leq\pi_{3}\leq\pi_{4}\text{ with at least one inequality being strict.}
\label{J1}%
\end{equation}

\end{enumerate}

Consider an experiment with $I$ increasing ordinal categories for a variable
$X$. Suppose that $n_{i}$ prefixed individuals are assigned to the $i$-th
category and $n=\sum_{i=1}^{I}n_{i}$. The individuals are followed over time
for the development of an event of interest $Y$ and the events related to the
individuals are independent. Let $N_{i1}$ be the random variable that
represents the number of individuals related to successful events ($Y=1$) out
of the total assigned to the $i$-th category, $n_{i}$, $i=1,...,I.$ If we
denote by $\pi_{i}=\Pr(Y=1|X=i)$ the probability of a success associated with
the $i$-th category, we have that $N_{i1}$ is a Binomial random variable with
parameters $n_{i}$ and $\pi_{i}$, $i=1,...,I$. Let $N_{i2}$ denote the number
of unsuccessful events associated with the $i$-th category, i.e. $N_{i2}%
=n_{i}-N_{i1}$, then the contingency table of a realization of $(N_{i1}%
,N_{i2})$, $i=1,...,I$, is in the last two columns of the following table%
\[%
\begin{tabular}
[c]{||c||cc||}\hline\hline
$n_{1}$ & $n_{11}$ & $n_{12}=n_{1}-n_{11}$\\
$\vdots$ & $\vdots$ & $\vdots$\\
$n_{i}$ & $n_{i1}$ & $n_{i2}=n_{i}-n_{i1}$\\
$\vdots$ & $\vdots$ & $\vdots$\\
$n_{I}$ & $n_{I1}$ & $n_{I2}=n_{I}-n_{I1}$\\\hline\hline
\end{tabular}
\ .
\]

Our purpose in this paper is to propose new order-restricted test statistics,
Wald-type and phi-divergence based test-statistics for testing%
\begin{align}
H_{0}  &  :\pi_{1}=\pi_{2}=\cdots=\pi_{I},\label{J1a}\\
H_{1}  &  :\pi_{1}\leq\pi_{2}\leq\cdots\leq\pi_{I}\text{ with at least one
inequality being strict.}\nonumber
\end{align}
The classical likelihood ratio test statistic will appear as a particular case
of phi-divergence based test-statistics. A log-linear formulation of
(\ref{J1}) is proposed in Section \ref{Sec1b}, fundamental for defining the
Wald type test-statistics. In Section \ref{Sec2} the families of
phi-divergence test statistics are presented. Section \ref{Sec3} is devoted to
solve the problem presented in this Section \ref{Sec1} for a illustrative
example. An extensive simulation study is carried out in Section \ref{Sec4}.

\section{Formulation for isotonic binomial proportions in terms of log-linear
and logistic regression modeling: Wald type test-statistics\label{Sec1b}}

Reparametrizating the initial problem through log-linear modeling, the
formulation of the null hypothesis is strongly simplified since all the
interaction parameters are zero under the null hypothesis and this is
appealing, in special, to create Wald type test-statistics. Let%
\begin{align}
\boldsymbol{p}  &  =\boldsymbol{p}(\boldsymbol{\theta})=(p_{11}%
(\boldsymbol{\theta}),p_{12}(\boldsymbol{\theta}),p_{21}(\boldsymbol{\theta
}),p_{22}(\boldsymbol{\theta}),...,p_{I1}(\boldsymbol{\theta}),p_{I2}%
(\boldsymbol{\theta}))^{T}\nonumber\\
&  =(\tfrac{n_{1}}{n}\pi_{1},\tfrac{n_{1}}{n}(1-\pi_{1}),\tfrac{n_{2}}{n}%
\pi_{2},\tfrac{n_{2}}{n}(1-\pi_{2}),...,\tfrac{n_{I}}{n}\pi_{I},\tfrac{n_{I}%
}{n}(1-\pi_{I}))^{T} \label{prob}%
\end{align}
be the probability vector of the following saturated log-linear model%
\begin{equation}
\log p_{ij}(\boldsymbol{\theta})=u+u_{1(i)}+\theta_{2(j)}+\theta_{12(ij)},
\label{loglineq}%
\end{equation}
with%
\begin{equation}
u_{1(I)}=0,\quad\theta_{2(2)}=0,\quad\theta_{12(i2)}=0,i=1,...,I-1,\quad
\theta_{12(Ij)}=0,j=1,2, \label{ident}%
\end{equation}
being the identifiability constraints,%
\begin{equation}
\boldsymbol{\theta}=(\theta_{2(1)},\theta_{12(11)},...,\theta_{12(1I-1)})^{T}
\label{theta}%
\end{equation}
the unknown parameters vector and $\boldsymbol{u}=(u,u_{1(1)},...,u_{1(I-1)}%
)^{T}$ with%
\begin{equation}
u=u(\boldsymbol{\theta})=\log\frac{n_{I}/n}{1+\exp\{\theta_{2(1)}\}},
\label{u}%
\end{equation}%
\begin{equation}
u_{1(i)}=u_{1(i)}(\boldsymbol{\theta})=\log\frac{\frac{n_{i}}{n_{I}}\left(
1+\exp\{\theta_{2(1)}\}\right)  }{1+\exp\{\theta_{2(1)}+\theta_{12(i1)}%
\}},\quad i=1,...,I-1, \label{theta1}%
\end{equation}
the redundant parameters, obtained through $\boldsymbol{\theta}$ taking into
account $p_{i1}(\boldsymbol{\theta})+p_{i0}(\boldsymbol{\theta})=\frac{n_{i}%
}{n}$, $i=1,...,I$. In terms of the log-linear formulation, (\ref{J1a}) is
equivalent to%
\begin{align}
H_{0}  &  :\theta_{12(11)}=\theta_{12(21)}=\cdots=\theta_{12(I-1,1)}%
=0,\label{J1b}\\
H_{1}  &  :\theta_{12(11)}\leq\theta_{12(21)}\leq\cdots\leq\theta
_{12(I-1,1)}\leq0\text{ with at least one inequality being strict.}\nonumber
\end{align}
Notice that $\theta_{2(1)}$ is a nuisance parameter since it does not
interfere in (\ref{J1b}). In particular, under the null hypothesis of $\pi
_{0}=\pi_{1}=\pi_{2}=\cdots=\pi_{I}$, the value of the nuisance parameter is
$\theta_{2(1)}=\mathrm{logit}(\pi_{0})=\log[\pi_{0}/(1-\pi_{0})]$, and thus it
contains all the information about the homogeneous probability vector.

In matrix notation, we can express the vector of parameters of the log-linear
model in terms of the following logistic regression%
\begin{equation}
\mathrm{logit}(\boldsymbol{\pi})=\boldsymbol{X\theta} \label{logistic}%
\end{equation}
where
\[
\boldsymbol{\pi}=(\pi_{1},\pi_{2},...,\pi_{I})^{T},\qquad\boldsymbol{X}=%
\begin{pmatrix}
\boldsymbol{1}_{I-1} & \boldsymbol{I}_{I-1}\\
1 & \boldsymbol{0}_{I-1}^{T}%
\end{pmatrix}
,
\]
$\boldsymbol{I}_{a}$\ is the the identity matrix of order $a$, $\boldsymbol{1}%
_{a}$\ is the $a$-vector of ones and $\boldsymbol{0}_{a}$\ is the $a$-vector
of zeros. Since a saturated model has been considered, $\boldsymbol{X}$ is a
full rank matrix and thus we can consider%
\begin{equation}
\boldsymbol{\theta}=\boldsymbol{X}^{-1}\mathrm{logit}(\boldsymbol{\pi}),
\label{loglin}%
\end{equation}
and on the other hand (\ref{J1b}) in matrix notation is given by%
\begin{align}
H_{0}  &  :\boldsymbol{R\theta}=\boldsymbol{0}_{I-1},\label{J1c}\\
H_{1}  &  :\boldsymbol{R\theta}\leq\boldsymbol{0}_{I-1}\text{ and
}\boldsymbol{R\theta}\neq\boldsymbol{0}_{I-1}\text{,}\nonumber
\end{align}
with $\boldsymbol{R}=(\boldsymbol{0}_{I-1},\boldsymbol{G}_{I-1})$, and
$\boldsymbol{G}_{I-1}$ is a square matrix of order $I-1$ with $1$-s in the
main diagonal and $-1$-s in the upper superdiagonal.

We shall consider three parameter spaces for $\boldsymbol{\theta}$%
\[
\Theta_{0}=\left\{  \boldsymbol{\theta}\in%
\mathbb{R}
^{I}:\boldsymbol{R\theta}=\boldsymbol{0}_{I-1}\right\}  \subset
\widetilde{\Theta}=\left\{  \boldsymbol{\theta}\in%
\mathbb{R}
^{I}:\boldsymbol{R\theta}\leq\boldsymbol{0}_{I-1}\right\}  \subset\Theta=%
\mathbb{R}
^{I},
\]
i.e. while $\Theta$ is the restricted parameter space, $\bar{\Theta}$ is
unrestricted, and $\Theta_{0}$ becomes the parameter space under the null
hypothesis. It is well known that, the Fisher information matrix for
$\boldsymbol{\theta}\in\Theta$ in the logistic regression is given by
\begin{equation}
\mathcal{I}_{F}(\boldsymbol{\theta})=\boldsymbol{X}^{T}\mathrm{diag}\{\nu
_{i}\pi_{i}(1-\pi_{i})\}_{i=1}^{I}\boldsymbol{X}, \label{FIM1}%
\end{equation}
where $\nu_{i}=\lim_{n\rightarrow\infty}\frac{n_{i}}{n}$, $i=1,...,I$. The
following result provides the explicit expression of the Fisher information
matrix under the null hypothesis given in (\ref{J1a}) or (\ref{J1b}).

\begin{theorem}
\label{th0}For $\boldsymbol{\theta}_{0}\in\Theta_{0}$ in the model
(\ref{loglineq}) or (\ref{logistic}), the Fisher information matrix is given
by%
\begin{equation}
\mathcal{I}_{F}(\widehat{\boldsymbol{\theta}})=\pi_{0}(1-\pi_{0})%
\begin{pmatrix}
1 & \nu_{1} & \nu_{2} & \cdots & \nu_{I-1}\\
\nu_{1} & \nu_{1} & 0 & \cdots & 0\\
\nu_{2} & 0 & \nu_{2} &  & 0\\
\vdots & \vdots &  & \ddots & \vdots\\
\nu_{I-1} & 0 & 0 & \cdots & \nu_{I-1}%
\end{pmatrix}
. \label{FIM2}%
\end{equation}

\end{theorem}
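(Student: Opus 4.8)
The plan is to evaluate the general formula \eqref{FIM1} at a point $\boldsymbol{\theta}_{0}\in\Theta_{0}$ and simplify. First I would observe what $\Theta_{0}$ says about $\boldsymbol{\pi}$: the constraint $\boldsymbol{R\theta}=\boldsymbol{0}_{I-1}$ together with \eqref{loglin} forces $\mathrm{logit}(\pi_{1})=\cdots=\mathrm{logit}(\pi_{I})$, i.e. $\pi_{1}=\cdots=\pi_{I}=\pi_{0}$ for a common value $\pi_{0}$ (with $\theta_{2(1)}=\mathrm{logit}(\pi_{0})$). Consequently the diagonal matrix in \eqref{FIM1} collapses: $\mathrm{diag}\{\nu_{i}\pi_{i}(1-\pi_{i})\}_{i=1}^{I}=\pi_{0}(1-\pi_{0})\,\mathrm{diag}\{\nu_{i}\}_{i=1}^{I}$, so
\[
\mathcal{I}_{F}(\boldsymbol{\theta}_{0})=\pi_{0}(1-\pi_{0})\,\boldsymbol{X}^{T}\mathrm{diag}\{\nu_{i}\}_{i=1}^{I}\boldsymbol{X}.
\]
It then remains to compute $\boldsymbol{X}^{T}\mathrm{diag}\{\nu_{i}\}\boldsymbol{X}$ for the specific design matrix $\boldsymbol{X}=\left(\begin{smallmatrix}\boldsymbol{1}_{I-1}&\boldsymbol{I}_{I-1}\\ 1&\boldsymbol{0}_{I-1}^{T}\end{smallmatrix}\right)$ given in \eqref{logistic}.

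Next I would carry out that block multiplication. Writing $\boldsymbol{D}=\mathrm{diag}\{\nu_{1},\dots,\nu_{I-1}\}$ and isolating the last row/column of $\boldsymbol{X}$, one gets
\[
\boldsymbol{X}^{T}\mathrm{diag}\{\nu_{i}\}_{i=1}^{I}\boldsymbol{X}=
\begin{pmatrix}
\boldsymbol{1}_{I-1}^{T} & 1\\
\boldsymbol{I}_{I-1} & \boldsymbol{0}_{I-1}
\end{pmatrix}
\begin{pmatrix}
\boldsymbol{D} & \boldsymbol{0}_{I-1}\\
\boldsymbol{0}_{I-1}^{T} & \nu_{I}
\end{pmatrix}
\begin{pmatrix}
\boldsymbol{1}_{I-1} & \boldsymbol{I}_{I-1}\\
1 & \boldsymbol{0}_{I-1}^{T}
\end{pmatrix}.
\]
The $(1,1)$ entry is $\boldsymbol{1}_{I-1}^{T}\boldsymbol{D}\boldsymbol{1}_{I-1}+\nu_{I}=\sum_{i=1}^{I}\nu_{i}=1$ (using $\sum_{i}\nu_{i}=1$, which follows from $\sum_i n_i=n$); the first-row/first-column off-diagonal block is $\boldsymbol{1}_{I-1}^{T}\boldsymbol{D}=(\nu_{1},\dots,\nu_{I-1})$; and the lower-right $(I-1)\times(I-1)$ block is $\boldsymbol{I}_{I-1}\boldsymbol{D}\boldsymbol{I}_{I-1}=\boldsymbol{D}=\mathrm{diag}\{\nu_{1},\dots,\nu_{I-1}\}$. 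Assembling these gives exactly the matrix in \eqref{FIM2}, up to the scalar $\pi_{0}(1-\pi_{0})$.

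There is no real obstacle here; the only points demanding a little care are (i) justifying that $\boldsymbol{R\theta}=\boldsymbol{0}$ genuinely forces all $\pi_{i}$ equal — this is immediate from invertibility of $\boldsymbol{X}$ and the structure of $\boldsymbol{R}=(\boldsymbol{0}_{I-1},\boldsymbol{G}_{I-1})$, since $\boldsymbol{G}_{I-1}$ is invertible and $\boldsymbol{G}_{I-1}(\theta_{12(11)},\dots,\theta_{12(I-1,1)})^{T}=\boldsymbol{0}$ yields all interaction parameters zero, hence a common logit — and (ii) the bookkeeping in the block product, especially keeping straight that the "extra" last category contributes $\nu_{I}$ only to the $(1,1)$ corner. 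I would also note in passing that the displayed matrix is written as $\mathcal{I}_{F}(\widehat{\boldsymbol{\theta}})$ in the statement but the argument is any $\boldsymbol{\theta}_{0}\in\Theta_{0}$; the value depends on $\boldsymbol{\theta}_{0}$ only through $\pi_{0}$, so the formula is unambiguous once $\pi_{0}$ is fixed.
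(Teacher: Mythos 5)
Your proposal is correct and follows essentially the same route as the paper, whose entire proof is the one-line remark that the result is immediate upon substituting $\pi_{i}=\pi_{0}$, $i=1,\dots,I$, into (\ref{FIM1}); you simply supply the details (the justification that $\Theta_{0}$ forces a common $\pi_{0}$, and the explicit block computation of $\boldsymbol{X}^{T}\mathrm{diag}\{\nu_{i}\}\boldsymbol{X}$ using $\sum_{i=1}^{I}\nu_{i}=1$) that the authors leave implicit. No discrepancies.
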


\begin{proof}
It is immediate by plugging $\pi_{i}=\pi_{0}$, $i=1,...,I$, to (\ref{FIM1}).
\end{proof}

If $\widehat{\boldsymbol{\theta}}$, $\widetilde{\boldsymbol{\theta}}$ and
$\overline{\boldsymbol{\theta}}$ represent the maximum likelihood estimator
(MLE) of $\boldsymbol{\theta}$\ focussed on the parameter spaces, $\Theta_{0}%
$, $\widetilde{\Theta}$, $\Theta$ respectively, according to Silvapulle and
Sen (2005, pages 154 and 166) we can consider three Wald-type test-statistics,%
\begin{align}
W(\widetilde{\boldsymbol{\theta}},\widehat{\boldsymbol{\theta}})  &
=n\widetilde{\boldsymbol{\theta}}^{T}\boldsymbol{R}^{T}\left(  \boldsymbol{R}%
\mathcal{I}_{F}^{-1}(\widehat{\boldsymbol{\theta}})\boldsymbol{R}^{T}\right)
^{-1}\boldsymbol{R}\widetilde{\boldsymbol{\theta}},\label{w0}\\
H(\widetilde{\boldsymbol{\theta}},\widehat{\boldsymbol{\theta}})  &
=n(\widetilde{\boldsymbol{\theta}}-\widehat{\boldsymbol{\theta}}%
)^{T}\mathcal{I}_{F}(\widehat{\boldsymbol{\theta}}%
)(\widetilde{\boldsymbol{\theta}}-\widehat{\boldsymbol{\theta}}),\label{w1}\\
D(\overline{\boldsymbol{\theta}},\widetilde{\boldsymbol{\theta}}%
,\widehat{\boldsymbol{\theta}})  &  =n(\overline{\boldsymbol{\theta}%
}-\widehat{\boldsymbol{\theta}})^{T}\mathcal{I}_{F}%
(\widehat{\boldsymbol{\theta}})(\overline{\boldsymbol{\theta}}%
-\widehat{\boldsymbol{\theta}})-n(\overline{\boldsymbol{\theta}}%
-\widetilde{\boldsymbol{\theta}})^{T}\mathcal{I}_{F}%
(\widetilde{\boldsymbol{\theta}})(\overline{\boldsymbol{\theta}}%
-\widetilde{\boldsymbol{\theta}}), \label{w2}%
\end{align}
where%
\begin{align*}
\mathcal{I}_{F}(\widehat{\boldsymbol{\theta}})  &  =\widehat{\pi}%
_{0}(1-\widehat{\pi}_{0})%
\begin{pmatrix}
1 & (\widehat{\boldsymbol{\nu}}^{\ast})^{T}\\
\widehat{\boldsymbol{\nu}}^{\ast} & \mathrm{diag}(\widehat{\boldsymbol{\nu}%
}^{\ast})
\end{pmatrix}
,\\
\mathcal{I}_{F}(\widetilde{\boldsymbol{\theta}})  &  =\boldsymbol{X}%
^{T}\mathrm{diag}\{\widehat{\nu}_{i}\widetilde{\pi}_{i}(1-\widetilde{\pi}%
_{i})\}_{i=1}^{I}\boldsymbol{X},\\
\widehat{\boldsymbol{\nu}}^{\ast}  &  =(\widehat{\nu}_{1},...,\widehat{\nu
}_{I-1})^{T}=(\frac{n_{1}}{n},...,\frac{n_{I-1}}{n})^{T}.
\end{align*}
These test-statistics, have according to Proposition 4.4.1 in Silvapulle and
Sen (2005), the same asymptotic distribution as the likelihood ratio test-statistic.

\begin{proposition}
Under the null hypothesis given in (\ref{J1a}) or (\ref{J1b}), the expression
of $W(\widetilde{\boldsymbol{\theta}},\widehat{\boldsymbol{\theta}})$,\ given
in (\ref{w0}), is as follows%
\[
W(\widetilde{\boldsymbol{\theta}},\widehat{\boldsymbol{\theta}})=n\widehat{\pi
}_{0}(1-\widehat{\pi}_{0})(\widetilde{\boldsymbol{\theta}}^{\ast}%
)^{T}\boldsymbol{\Sigma}_{\widehat{\boldsymbol{\nu}}^{\ast}}%
\widetilde{\boldsymbol{\theta}}^{\ast},
\]
where $\boldsymbol{\theta}^{\ast}=(\theta_{12(11)},...,\theta_{12(1I-1)})^{T}%
$, $\boldsymbol{\Sigma}_{\widehat{\boldsymbol{\nu}}^{\ast}}=\mathrm{diag}%
(\widehat{\boldsymbol{\nu}}^{\ast})-\widehat{\boldsymbol{\nu}}^{\ast
}(\widehat{\boldsymbol{\nu}}^{\ast})^{T}$,%
\begin{align*}
\widehat{\pi}_{0}  &  =\tfrac{N_{\bullet1}}{n},\\
N_{\bullet1}  &  =N_{11}+N_{21}+...+N_{I1}.
\end{align*}

\end{proposition}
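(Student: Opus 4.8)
The plan is to unwind the quadratic form in (\ref{w0}) using the block structure of $\boldsymbol{R}$ and of $\mathcal{I}_{F}(\widehat{\boldsymbol{\theta}})$, performing the required matrix inversions in two stages.

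First I would split $\widetilde{\boldsymbol{\theta}}=(\widetilde{\theta}_{2(1)},(\widetilde{\boldsymbol{\theta}}^{\ast})^{T})^{T}$ and use $\boldsymbol{R}=(\boldsymbol{0}_{I-1},\boldsymbol{G}_{I-1})$ to get $\boldsymbol{R}\widetilde{\boldsymbol{\theta}}=\boldsymbol{G}_{I-1}\widetilde{\boldsymbol{\theta}}^{\ast}$; likewise, since the first column of $\boldsymbol{R}$ vanishes, $\boldsymbol{R}\mathcal{I}_{F}^{-1}(\widehat{\boldsymbol{\theta}})\boldsymbol{R}^{T}=\boldsymbol{G}_{I-1}\boldsymbol{A}\boldsymbol{G}_{I-1}^{T}$, where $\boldsymbol{A}$ denotes the lower-right $(I-1)\times(I-1)$ block of $\mathcal{I}_{F}^{-1}(\widehat{\boldsymbol{\theta}})$. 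Because $\boldsymbol{G}_{I-1}$ is triangular with $1$'s on the diagonal it is invertible, so the two $\boldsymbol{G}_{I-1}$ factors cancel against $(\boldsymbol{G}_{I-1}\boldsymbol{A}\boldsymbol{G}_{I-1}^{T})^{-1}=(\boldsymbol{G}_{I-1}^{T})^{-1}\boldsymbol{A}^{-1}\boldsymbol{G}_{I-1}^{-1}$ and (\ref{w0}) reduces to $W(\widetilde{\boldsymbol{\theta}},\widehat{\boldsymbol{\theta}})=n(\widetilde{\boldsymbol{\theta}}^{\ast})^{T}\boldsymbol{A}^{-1}\widetilde{\boldsymbol{\theta}}^{\ast}$. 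It then remains only to identify $\boldsymbol{A}^{-1}$.

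Second I would invert $\mathcal{I}_{F}(\widehat{\boldsymbol{\theta}})=\widehat{\pi}_{0}(1-\widehat{\pi}_{0})\boldsymbol{M}$, where $\boldsymbol{M}$ is the $2\times 2$ partitioned matrix displayed just before the statement, by the Schur-complement formula pivoting on its lower-right block $\boldsymbol{D}=\mathrm{diag}(\widehat{\boldsymbol{\nu}}^{\ast})$. The algebra simplifies thanks to the identities $\boldsymbol{D}^{-1}\widehat{\boldsymbol{\nu}}^{\ast}=\boldsymbol{1}_{I-1}$ and $(\widehat{\boldsymbol{\nu}}^{\ast})^{T}\boldsymbol{D}^{-1}\widehat{\boldsymbol{\nu}}^{\ast}=\sum_{i=1}^{I-1}\widehat{\nu}_{i}=1-\widehat{\nu}_{I}$ with $\widehat{\nu}_{I}=n_{I}/n$, so the Schur complement of $\boldsymbol{D}$ in $\boldsymbol{M}$ equals $\widehat{\nu}_{I}>0$ and the lower-right block of $\boldsymbol{M}^{-1}$ is $\boldsymbol{D}^{-1}+\widehat{\nu}_{I}^{-1}\boldsymbol{1}_{I-1}\boldsymbol{1}_{I-1}^{T}$; hence $\boldsymbol{A}=[\widehat{\pi}_{0}(1-\widehat{\pi}_{0})]^{-1}(\boldsymbol{D}^{-1}+\widehat{\nu}_{I}^{-1}\boldsymbol{1}_{I-1}\boldsymbol{1}_{I-1}^{T})$. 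Applying the Sherman--Morrison formula to this rank-one update, and using $\boldsymbol{D}\boldsymbol{1}_{I-1}=\widehat{\boldsymbol{\nu}}^{\ast}$ and $\boldsymbol{1}_{I-1}^{T}\widehat{\boldsymbol{\nu}}^{\ast}=1-\widehat{\nu}_{I}$ once more, gives $(\boldsymbol{D}^{-1}+\widehat{\nu}_{I}^{-1}\boldsymbol{1}_{I-1}\boldsymbol{1}_{I-1}^{T})^{-1}=\boldsymbol{D}-\widehat{\boldsymbol{\nu}}^{\ast}(\widehat{\boldsymbol{\nu}}^{\ast})^{T}=\boldsymbol{\Sigma}_{\widehat{\boldsymbol{\nu}}^{\ast}}$, so $\boldsymbol{A}^{-1}=\widehat{\pi}_{0}(1-\widehat{\pi}_{0})\boldsymbol{\Sigma}_{\widehat{\boldsymbol{\nu}}^{\ast}}$. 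Plugging this into the reduced expression for $W(\widetilde{\boldsymbol{\theta}},\widehat{\boldsymbol{\theta}})$ yields the claim; the null hypothesis is used only to guarantee $\widehat{\boldsymbol{\theta}}\in\Theta_{0}$, so that $\widehat{\pi}_{1}=\cdots=\widehat{\pi}_{I}=\widehat{\pi}_{0}=N_{\bullet1}/n$ and $\mathcal{I}_{F}(\widehat{\boldsymbol{\theta}})$ has the stated form, i.e.\ the sample analogue of Theorem \ref{th0}.

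All steps are elementary linear algebra, so I do not expect a genuine obstacle; the only delicate point is the bookkeeping in the two-stage inversion — observing that the $\boldsymbol{G}_{I-1}$ factors drop out, that it is the lower-right block of $\mathcal{I}_{F}^{-1}(\widehat{\boldsymbol{\theta}})$ (not of $\mathcal{I}_{F}(\widehat{\boldsymbol{\theta}})$) that has to be inverted, and then checking that the two successive rank-one inversions (Schur complement, then Sherman--Morrison) collapse exactly to $\boldsymbol{\Sigma}_{\widehat{\boldsymbol{\nu}}^{\ast}}$.
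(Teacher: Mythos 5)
Your proposal is correct and follows essentially the same route as the paper: reduce $W(\widetilde{\boldsymbol{\theta}},\widehat{\boldsymbol{\theta}})$ via the block structure $\boldsymbol{R}=(\boldsymbol{0}_{I-1},\boldsymbol{G}_{I-1})$, cancel the invertible factors $\boldsymbol{G}_{I-1}$, and identify the lower-right block of $\mathcal{I}_{F}^{-1}(\widehat{\boldsymbol{\theta}})$ as $[\widehat{\pi}_{0}(1-\widehat{\pi}_{0})]^{-1}\boldsymbol{\Sigma}_{\widehat{\boldsymbol{\nu}}^{\ast}}^{-1}$. The only difference is cosmetic: the paper reads off that block in one step as the inverse of the Schur complement of the scalar $(1,1)$ entry, whereas you pivot on $\mathrm{diag}(\widehat{\boldsymbol{\nu}}^{\ast})$ and then invoke Sherman--Morrison to collapse the resulting rank-one update back to $\boldsymbol{\Sigma}_{\widehat{\boldsymbol{\nu}}^{\ast}}$.
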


\begin{proof}
From Theorem \ref{th0} the $2\times2$ block structure of the Fisher
information matrix is%
\[
\mathcal{I}_{F}(\boldsymbol{\theta}_{0})=\widehat{\pi}_{0}(1-\widehat{\pi}%
_{0})%
\begin{pmatrix}
1 & (\widehat{\boldsymbol{\nu}}^{\ast})^{T}\\
\widehat{\boldsymbol{\nu}}^{\ast} & \mathrm{diag}(\widehat{\boldsymbol{\nu}%
}^{\ast})
\end{pmatrix}
\]
and%
\[
\mathcal{I}_{F}^{-1}(\boldsymbol{\theta}_{0})=\frac{1}{\widehat{\pi}%
_{0}(1-\widehat{\pi}_{0})}%
\begin{pmatrix}
a & \boldsymbol{b}^{T}\\
\boldsymbol{c} & \boldsymbol{D}%
\end{pmatrix}
,
\]
with $\boldsymbol{D}=\boldsymbol{\Sigma}_{\widehat{\boldsymbol{\nu}}^{\ast}%
}^{-1}$, $\boldsymbol{\Sigma}_{\widehat{\boldsymbol{\nu}}^{\ast}%
}=\mathrm{diag}(\widehat{\boldsymbol{\nu}}^{\ast})-\widehat{\boldsymbol{\nu}%
}^{\ast}(\widehat{\boldsymbol{\nu}}^{\ast})^{T}$. But%
\[
\boldsymbol{R}\mathcal{I}_{F}^{-1}(\widehat{\boldsymbol{\theta}}%
)\boldsymbol{R}^{T}=\frac{1}{\widehat{\pi}_{0}(1-\widehat{\pi}_{0}%
)}\boldsymbol{G}_{I-1}\boldsymbol{\Sigma}_{\widehat{\boldsymbol{\nu}}^{\ast}%
}^{-1}\boldsymbol{G}_{I-1}^{T}%
\]
and%
\[
\boldsymbol{R}\widetilde{\boldsymbol{\theta}}=\boldsymbol{G}_{I-1}%
\boldsymbol{\theta}^{\ast}.
\]
Therefore,%
\begin{align*}
W(\widetilde{\boldsymbol{\theta}},\widehat{\boldsymbol{\theta}})  &
=n\widetilde{\boldsymbol{\theta}}^{T}\boldsymbol{R}^{T}\left(  \boldsymbol{R}%
\mathcal{I}_{F}^{-1}(\widehat{\boldsymbol{\theta}})\boldsymbol{R}^{T}\right)
^{-1}\boldsymbol{R}\widetilde{\boldsymbol{\theta}}\\
&  =n\widehat{\pi}_{0}(1-\widehat{\pi}_{0})(\widetilde{\boldsymbol{\theta}%
}^{\ast})^{T}\boldsymbol{\Sigma}_{\widehat{\boldsymbol{\nu}}^{\ast}%
}\widetilde{\boldsymbol{\theta}}^{\ast}.
\end{align*}

\end{proof}

There is an explicit formula for the MLEs of $\boldsymbol{\theta}$ under the
null hypothesis,%
\begin{equation}
\widehat{\boldsymbol{\theta}}=(\widehat{\theta}_{2(1)},\widehat{\theta
}_{12(11)},...,\widehat{\theta}_{12(1I-1)})^{T}=(\mathrm{logit}(\widehat{\pi
}_{0}),0,...,0)^{T}. \label{thetaHat}%
\end{equation}
For the calculation of $\widetilde{\boldsymbol{\theta}}=(\widetilde{\theta
}_{2(1)},\widetilde{\theta}_{12(11)},...,\widetilde{\theta}_{12(1I-1)})^{T}$
or $\overline{\boldsymbol{\theta}}=(\overline{\theta}_{2(1)},\overline{\theta
}_{12(11)},...,\overline{\theta}_{12(1I-1)})^{T}$, it is much easier to
calculate first the corresponding MLE for the probability vector,
$\widetilde{\boldsymbol{\pi}}=(\widetilde{\pi}_{1},\widetilde{\pi}%
_{2},...,\widetilde{\pi}_{I})^{T}$ or $\overline{\boldsymbol{\pi}}%
=(\overline{\pi}_{1},\overline{\pi}_{2},...,\overline{\pi}_{I})^{T}$, and
plugging it to (\ref{loglin}). There is an explicit formula for%
\begin{equation}
\overline{\boldsymbol{\pi}}=(\tfrac{N_{11}}{n_{1}},\tfrac{N_{21}}{n_{2}%
},...,\tfrac{N_{I1}}{n_{I}})^{T}, \label{piBar}%
\end{equation}
and for\ calculating $\widetilde{\boldsymbol{\pi}}$\ the following PAVA
algorithm can be used.

\begin{algorithm}
[Order restricted estimation of probabilities]\label{AlgorP}The MLE of
$\boldsymbol{\pi}=(\pi_{1},\pi_{2},\cdots,\pi_{I})^{T}$ under the restriction
of $\pi_{1}\leq\pi_{2}\leq\cdots\leq\pi_{I}$, $\widetilde{\boldsymbol{\pi}%
}=(\widetilde{\pi}_{1},\widetilde{\pi}_{2},\cdots,\widetilde{\pi}_{I})^{T}$,
is calculated in the following way:

\noindent\texttt{STEP 1: Do }$\widetilde{\boldsymbol{\pi}}:=\overline
{\boldsymbol{\pi}}$\texttt{, where }$\overline{\boldsymbol{\pi}}$ is
(\ref{piBar})\texttt{.}\newline\texttt{STEP 2: While not }$\widetilde{\pi}%
_{i}\leq\widetilde{\pi}_{i+1}$\texttt{ }$\forall i=1,...,I-1$\texttt{
do}\newline\texttt{\hspace*{1.75cm}For }$i=1,...,I-1$\texttt{\newline%
\hspace*{2.25cm}If }$\widetilde{\pi}_{i}\not \leq \widetilde{\pi}_{i+1}%
$\texttt{ do }$\widetilde{\pi}_{i}:=\frac{n_{i}}{n}\widetilde{\pi}_{i}%
+\frac{n_{i+1}}{n}\widetilde{\pi}_{i+1}$\texttt{ and }$\widetilde{\pi}%
_{i+1}:=\widetilde{\pi}_{i}$.
\end{algorithm}

\section{Phi-divergence test statistics\label{Sec2}}

The classical order-restricted likelihood ratio test for testing (\ref{J1a})
is given by%
\[
G^{2}=2\sum_{i=1}^{I}\left(  N_{i1}\log\frac{\widetilde{\pi}_{i}}%
{\widehat{\pi}_{0}}+\left(  n_{i}-N_{i1}\right)  \log\frac{1-\widetilde{\pi
}_{i}}{1-\widehat{\pi}_{0}}\right)
\]
(see for instance Mancuso et al (2001)). The Kullback-Leibler divergence
measure between two $2I$-dimensional probability vectors $\boldsymbol{p=}%
\left(  p_{11},p_{12},...,p_{I1},p_{I2}\right)  ^{T}$ and $\boldsymbol{q=}%
\left(  q_{11},q_{12},...,q_{I1},q_{I2}\right)  ^{T}$, is given by
\[
\mathrm{d}_{Kull}(\boldsymbol{p},\boldsymbol{q})=\sum_{i=1}^{I}\left(
p_{i1}\log\tfrac{p_{i1}}{q_{i1}}+p_{i2}\log\tfrac{p_{i2}}{q_{i2}}\right)  .
\]
It is an easy exercise to verify that
\begin{equation}
G^{2}=2n(\mathrm{d}_{Kull}(\overline{\boldsymbol{p}},\boldsymbol{p}%
(\widehat{\boldsymbol{\theta}}))-\mathrm{d}_{Kull}(\overline{\boldsymbol{p}%
},\boldsymbol{p}(\widetilde{\boldsymbol{\theta}}))), \label{G}%
\end{equation}
where%
\begin{align*}
\overline{\boldsymbol{p}}  &  =\boldsymbol{p}(\overline{\boldsymbol{\theta}%
})=(\tfrac{n_{1}}{n}\overline{\pi}_{1},\tfrac{n_{1}}{n}(1-\overline{\pi}%
_{1}),\tfrac{n_{2}}{n}\overline{\pi}_{2},\tfrac{n_{2}}{n}(1-\overline{\pi}%
_{2}),...,\tfrac{n_{I}}{n}\overline{\pi}_{I},\tfrac{n_{I}}{n}(1-\overline{\pi
}_{I}))^{T}\\
&  =(\tfrac{N_{11}}{n},\tfrac{N_{12}}{n},\tfrac{N_{21}}{n},\tfrac{N_{22}}%
{n},...,\tfrac{N_{I1}}{n},\tfrac{N_{I2}}{n})^{T},\\
\widetilde{\boldsymbol{p}}  &  =\boldsymbol{p}(\widetilde{\boldsymbol{\theta}%
})=(\tfrac{n_{1}}{n}\widetilde{\pi}_{1},\tfrac{n_{1}}{n}(1-\widetilde{\pi}%
_{1}),\tfrac{n_{2}}{n}\widetilde{\pi}_{2},\tfrac{n_{2}}{n}(1-\widetilde{\pi
}_{2}),...,\tfrac{n_{I}}{n}\widetilde{\pi}_{I},\tfrac{n_{I}}{n}%
(1-\widetilde{\pi}_{I}))^{T},\\
\widehat{\boldsymbol{p}}  &  =\boldsymbol{p}(\widehat{\boldsymbol{\theta}%
})=(\tfrac{n_{1}}{n}\widehat{\pi}_{0},\tfrac{n_{1}}{n}(1-\widehat{\pi}%
_{0}),\tfrac{n_{2}}{n}\widehat{\pi}_{0},\tfrac{n_{2}}{n}(1-\widehat{\pi}%
_{0}),...,\tfrac{n_{I}}{n}\widehat{\pi}_{0},\tfrac{n_{I}}{n}(1-\widehat{\pi
}_{0}))^{T}.
\end{align*}

The classical order-restricted chi-square test statistic for testing
(\ref{J1a}), known as Bartholomew's test-statistic, is given by
\begin{equation}
X^{2}=\frac{1}{\widehat{\pi}_{0}\left(  1-\widehat{\pi}_{0}\right)  }%
{\displaystyle\sum\limits_{i=1}^{I}}
n_{i}\left(  \widetilde{\pi}_{i}-\widehat{\pi}_{0}\right)  ^{2}, \label{J2}%
\end{equation}
which can be written as
\begin{equation}
X^{2}=2n\mathrm{d}_{Pearson}(\boldsymbol{p}(\widetilde{\boldsymbol{\theta}%
}),\boldsymbol{p}(\widehat{\boldsymbol{\theta}})), \label{J3}%
\end{equation}
where $\mathrm{d}_{Pearson}(\boldsymbol{p},\boldsymbol{q})$ is the Pearson
divergence measure defined by%
\[
\mathrm{d}_{Pearson}(\boldsymbol{p},\boldsymbol{q})=\frac{1}{2}\sum_{i=1}%
^{I}\left(  \tfrac{(p_{i1}-q_{i1})^{2}}{q_{i1}}+\tfrac{(p_{i2}-q_{i2})^{2}%
}{q_{i2}}\right)  .
\]
Details about this test-statistic can be found in Fleiss et al. (2003, Section 9.3).

More general than the Kullback-Leibler divergence and Pearson divergence
measures are $\phi$-divergence measures, defined as
\[
d_{\phi}(\boldsymbol{p},\boldsymbol{q})=\sum_{i=1}^{I}\left(  q_{i1}%
\phi\left(  \tfrac{p_{i1}}{q_{i1}}\right)  +q_{i2}\phi\left(  \tfrac{p_{i2}%
}{q_{i2}}\right)  \right)  ,
\]
where $\phi:%
\mathbb{R}
_{+}\longrightarrow%
\mathbb{R}
$ is a convex function such that $\phi(1)=\phi^{\prime}(1)=0$, $\phi
^{\prime\prime}(1)>0$, $0\phi(\frac{0}{0})=0$, $0\phi(\frac{p}{0}%
)=p\lim_{u\rightarrow\infty}\frac{\phi(u)}{u}$, for $p\neq0$. For more details
about $\phi$-divergence measures see Pardo (2006).

Based on $\phi$-divergence measures we shall consider in this paper two
families of order-restricted $\phi$-divergence test statistics valid for
testing (\ref{J1a}) or (\ref{J1b}). The first one generalizes the
order-restricted likelihood ratio test given in (\ref{G}) in the sense that we
replace on it the Kullback-Leibler divergence measure by a phi-divergence
measure and its expression is
\begin{align}
&  T_{\phi}(\overline{\boldsymbol{p}},\boldsymbol{p}%
(\widetilde{\boldsymbol{\theta}}),\boldsymbol{p}(\widehat{\boldsymbol{\theta}%
}))=\frac{2n}{\phi^{\prime\prime}(1)}(\mathrm{d}_{\phi}(\overline
{\boldsymbol{p}},\boldsymbol{p}(\widehat{\boldsymbol{\theta}}))-\mathrm{d}%
_{\phi}(\overline{\boldsymbol{p}},\boldsymbol{p}(\widetilde{\boldsymbol{\theta
}})))\label{J4}\\
&  =\frac{2}{\phi^{\prime\prime}(1)}\left\{  \sum_{i=1}^{I}n_{i}\left(
\widehat{\pi}_{0}\phi\left(  \frac{N_{i1}}{n_{i}\widehat{\pi}_{0}}\right)
-\widetilde{\pi}_{i}\phi\left(  \frac{N_{i1}}{n_{i}\widetilde{\pi}_{i}%
}\right)  +(1-\widehat{\pi}_{0})\phi\left(  \frac{n_{i}-N_{i1}}{n_{i}%
(1-\widehat{\pi}_{0})}\right)  -\left(  1-\widetilde{\pi}_{i}\right)
\phi\left(  \frac{n_{i}-N_{i1}}{n_{i}\left(  1-\widetilde{\pi}_{i}\right)
}\right)  \right)  \right\}  .\nonumber
\end{align}
For $\phi(x)=x\log x-x+1$, we get the likelihood ratio test.

The second one generalizes the order-restricted Pearson test statistic given
in (\ref{J3}) in the sense that we replace on it the Pearson divergence
measure by a phi-divergence measure and its expression is
\begin{align}
S_{\phi}(\boldsymbol{p}(\widetilde{\boldsymbol{\theta}}),\boldsymbol{p}%
(\widehat{\boldsymbol{\theta}}))  &  =\frac{2n}{\phi^{\prime\prime}(1)}%
d_{\phi}(\boldsymbol{p}(\widetilde{\boldsymbol{\theta}}),\boldsymbol{p}%
(\widehat{\boldsymbol{\theta}}))\label{J5}\\
&  =\frac{2}{\phi^{\prime\prime}(1)}\left\{  \sum_{i=1}^{I}n_{i}\left(
\widehat{\pi}_{0}\phi\left(  \frac{\widetilde{\pi}_{i}}{\widehat{\pi}_{0}%
}\right)  +\left(  1-\widehat{\pi}_{0}\right)  \phi\left(  \frac
{1-\widetilde{\pi}_{i}}{1-\widehat{\pi}_{0}}\right)  \right)  \right\}
.\nonumber
\end{align}
For $\phi(x)=\frac{1}{2}\left(  x-1\right)  ^{2}$, we get the Pearson test-statistics.

The following theorem provides the link between the both test-statistics,
$T_{\phi}(\overline{\boldsymbol{p}},\boldsymbol{p}%
(\widetilde{\boldsymbol{\theta}}),\boldsymbol{p}(\widehat{\boldsymbol{\theta}%
}))$ and $S_{\phi}(\boldsymbol{p}(\widetilde{\boldsymbol{\theta}%
}),\boldsymbol{p}(\widehat{\boldsymbol{\theta}}))$, and Wald-type test-statistics.

\begin{theorem}
\label{th1}For testing (\ref{J1a}) or (\ref{J1b}), the asymptotic distribution
of
\[
T\in\{T_{\phi}(\overline{\boldsymbol{p}},\boldsymbol{p}%
(\widetilde{\boldsymbol{\theta}}),\boldsymbol{p}(\widehat{\boldsymbol{\theta}%
})),S_{\phi}(\boldsymbol{p}(\widetilde{\boldsymbol{\theta}}),\boldsymbol{p}%
(\widehat{\boldsymbol{\theta}})),W(\widetilde{\boldsymbol{\theta}%
},\widehat{\boldsymbol{\theta}}),H(\widetilde{\boldsymbol{\theta}%
},\widehat{\boldsymbol{\theta}}),D(\overline{\boldsymbol{\theta}%
},\widetilde{\boldsymbol{\theta}},\widehat{\boldsymbol{\theta}})\}
\]
is common and is given by%
\[
\lim_{n\rightarrow\infty}\Pr(T\leq x)=\sum_{i=0}^{I-1}w_{i}(I-1,\boldsymbol{V}%
)\Pr(\chi_{i}^{2}\leq x),
\]
where%
\begin{equation}
\boldsymbol{V}=\boldsymbol{G}_{I-1}\mathrm{diag}^{-1}(\boldsymbol{\nu}^{\ast
})\boldsymbol{G}_{I-1}^{T}+\frac{1}{\nu_{I}}\boldsymbol{e}_{I-1}%
\boldsymbol{e}_{I-1}^{T}, \label{A}%
\end{equation}
and $\{w_{i}(I-1,\boldsymbol{V})\}_{i=0}^{I-1}$ is the set of weights such
that $\sum_{i=0}^{I-1}w_{i}(I-1,\boldsymbol{V})=1$ and its values are given in
Theorem \ref{prop1}.
\end{theorem}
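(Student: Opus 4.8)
The plan is to prove that, under $H_0$, every one of the five statistics differs by $o_P(1)$ from Bartholomew's statistic $X^{2}$ of (\ref{J2})--(\ref{J3}), and then to identify the common limit as the chi-bar-squared law of the canonical order-restricted Wald statistic for the cone $\boldsymbol{R\theta}\leq\boldsymbol{0}_{I-1}$. Throughout I would work under $H_0$, where $\widehat{\pi}_0$ and all coordinates of $\overline{\boldsymbol{\pi}}$ are $\pi_0+O_P(n^{-1/2})$ by the central limit theorem, and --- by the nonexpansiveness of the PAVA projection (Algorithm \ref{AlgorP}) in the weighted Euclidean norm, with $\pi_0\boldsymbol{1}_I$ itself isotonic --- so are all coordinates of $\widetilde{\boldsymbol{\pi}}$; consequently $\sqrt{n}\boldsymbol{R}\overline{\boldsymbol{\theta}}$ is asymptotically $N(\boldsymbol{0}_{I-1},\boldsymbol{\Gamma})$ with $\boldsymbol{\Gamma}=\boldsymbol{R}\mathcal{I}_F^{-1}(\boldsymbol{\theta}_0)\boldsymbol{R}^{T}$.

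First I would dispose of $S_{\phi}$ and $T_{\phi}$. Since $\phi(1)=\phi^{\prime}(1)=0$ and $\phi^{\prime\prime}(1)>0$, a second-order Taylor expansion gives $q\phi(p/q)=\tfrac{\phi^{\prime\prime}(1)}{2}(p-q)^{2}/q+R(p,q)$, and because every ratio entering (\ref{J4})--(\ref{J5}) differs from $1$ by $O_P(n^{-1/2})$ under $H_0$, the remainder is negligible after summing and multiplying by $n$ (the standard argument, cf. Pardo (2006)); thus $\tfrac{1}{\phi^{\prime\prime}(1)}\mathrm{d}_{\phi}=\mathrm{d}_{Pearson}+o_P(n^{-1})$. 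This gives at once $S_{\phi}=2n\mathrm{d}_{Pearson}(\boldsymbol{p}(\widetilde{\boldsymbol{\theta}}),\boldsymbol{p}(\widehat{\boldsymbol{\theta}}))+o_P(1)=X^{2}+o_P(1)$ by (\ref{J3}), and $T_{\phi}=2n(\mathrm{d}_{Pearson}(\overline{\boldsymbol{p}},\boldsymbol{p}(\widehat{\boldsymbol{\theta}}))-\mathrm{d}_{Pearson}(\overline{\boldsymbol{p}},\boldsymbol{p}(\widetilde{\boldsymbol{\theta}})))+o_P(1)$. In the latter, $2n\mathrm{d}_{Pearson}(\overline{\boldsymbol{p}},\boldsymbol{p}(\widehat{\boldsymbol{\theta}}))=\tfrac{1}{\widehat{\pi}_0(1-\widehat{\pi}_0)}\sum_i n_i(\overline{\pi}_i-\widehat{\pi}_0)^{2}$ and, after replacing $\widetilde{\pi}_i(1-\widetilde{\pi}_i)$ by $\widehat{\pi}_0(1-\widehat{\pi}_0)$ at cost $o_P(1)$, $2n\mathrm{d}_{Pearson}(\overline{\boldsymbol{p}},\boldsymbol{p}(\widetilde{\boldsymbol{\theta}}))=\tfrac{1}{\widehat{\pi}_0(1-\widehat{\pi}_0)}\sum_i n_i(\overline{\pi}_i-\widetilde{\pi}_i)^{2}+o_P(1)$; the Pythagorean identity $\sum_i n_i(\overline{\pi}_i-\widetilde{\pi}_i)(\widetilde{\pi}_i-\widehat{\pi}_0)=0$ --- which holds because $\widetilde{\boldsymbol{\pi}}$ is constant on each level block $B$ at the corresponding weighted average, so $\sum_{i\in B}n_i(\overline{\pi}_i-\widetilde{\pi}_i)=0$ --- then reduces the difference to $\tfrac{1}{\widehat{\pi}_0(1-\widehat{\pi}_0)}\sum_i n_i(\widetilde{\pi}_i-\widehat{\pi}_0)^{2}=X^{2}$. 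Hence $T_{\phi}=X^{2}+o_P(1)$; in particular so is the order-restricted likelihood ratio $G^{2}$ (the case $\phi(x)=x\log x-x+1$).

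Next I would match the Wald-type statistics. From (\ref{loglin}) and the explicit inverse of $\boldsymbol{X}$ one has $\theta_{12(1i)}=\mathrm{logit}(\pi_i)-\mathrm{logit}(\pi_I)$, so a delta-method expansion at $\pi_0$ gives $\widetilde{\boldsymbol{\theta}}^{\ast}=\tfrac{1}{\widehat{\pi}_0(1-\widehat{\pi}_0)}(\widetilde{\pi}_i-\widetilde{\pi}_I)_{i=1}^{I-1}+O_P(n^{-1})\boldsymbol{1}_{I-1}$. Substituting this into the Proposition's identity $W(\widetilde{\boldsymbol{\theta}},\widehat{\boldsymbol{\theta}})=n\widehat{\pi}_0(1-\widehat{\pi}_0)(\widetilde{\boldsymbol{\theta}}^{\ast})^{T}\boldsymbol{\Sigma}_{\widehat{\boldsymbol{\nu}}^{\ast}}\widetilde{\boldsymbol{\theta}}^{\ast}$, using that the PAVA projection preserves the weighted total ($\sum_i n_i\widetilde{\pi}_i=n\widehat{\pi}_0$), and simplifying via the elementary identity $n(\widetilde{\pi}_i-\widetilde{\pi}_I)_{i}^{T}\boldsymbol{\Sigma}_{\widehat{\boldsymbol{\nu}}^{\ast}}(\widetilde{\pi}_i-\widetilde{\pi}_I)_{i}=\sum_{i=1}^{I}n_i(\widetilde{\pi}_i-\widehat{\pi}_0)^{2}$, one obtains $W=X^{2}+o_P(1)$. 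For $H(\widetilde{\boldsymbol{\theta}},\widehat{\boldsymbol{\theta}})$ and $D(\overline{\boldsymbol{\theta}},\widetilde{\boldsymbol{\theta}},\widehat{\boldsymbol{\theta}})$ I would invoke Proposition 4.4.1 of Silvapulle and Sen (2005), already cited above, by which both are asymptotically equivalent to $G^{2}$, hence to $X^{2}$. Thus all five members of $\{T\}$ differ from $X^{2}$ by $o_P(1)$ and share its asymptotic distribution.

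Finally I would identify that distribution. By the Proposition, $W=n\widetilde{\boldsymbol{\theta}}^{T}\boldsymbol{R}^{T}(\boldsymbol{R}\mathcal{I}_F^{-1}(\widehat{\boldsymbol{\theta}})\boldsymbol{R}^{T})^{-1}\boldsymbol{R}\widetilde{\boldsymbol{\theta}}$ is the canonical order-restricted Wald statistic for $H_0:\boldsymbol{R\theta}=\boldsymbol{0}_{I-1}$ against the closed convex cone $\boldsymbol{R\theta}\leq\boldsymbol{0}_{I-1}$, with restricted estimator $\boldsymbol{R}\widetilde{\boldsymbol{\theta}}$ and limiting covariance $\boldsymbol{\Gamma}$; by the chi-bar-squared theory of such tests (Silvapulle and Sen (2005), Chapter 3) its null limit is $\sum_{i=0}^{I-1}w_i(I-1,\boldsymbol{\Gamma})\Pr(\chi_i^{2}\leq x)$, and since these weights are unchanged under multiplication of $\boldsymbol{\Gamma}$ by a positive scalar it suffices to compute $\boldsymbol{\Gamma}$ up to scale. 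As in the proof of the Proposition, $\boldsymbol{\Gamma}=\tfrac{1}{\pi_0(1-\pi_0)}\boldsymbol{G}_{I-1}\boldsymbol{\Sigma}_{\boldsymbol{\nu}^{\ast}}^{-1}\boldsymbol{G}_{I-1}^{T}$ with $\boldsymbol{\Sigma}_{\boldsymbol{\nu}^{\ast}}=\mathrm{diag}(\boldsymbol{\nu}^{\ast})-\boldsymbol{\nu}^{\ast}(\boldsymbol{\nu}^{\ast})^{T}$; Sherman--Morrison, together with $1-(\boldsymbol{\nu}^{\ast})^{T}\mathrm{diag}^{-1}(\boldsymbol{\nu}^{\ast})\boldsymbol{\nu}^{\ast}=\nu_I$, gives $\boldsymbol{\Sigma}_{\boldsymbol{\nu}^{\ast}}^{-1}=\mathrm{diag}^{-1}(\boldsymbol{\nu}^{\ast})+\tfrac{1}{\nu_I}\boldsymbol{1}_{I-1}\boldsymbol{1}_{I-1}^{T}$, and since $\boldsymbol{G}_{I-1}\boldsymbol{1}_{I-1}=\boldsymbol{e}_{I-1}$ this collapses to $\boldsymbol{\Gamma}=\tfrac{1}{\pi_0(1-\pi_0)}\boldsymbol{V}$ with $\boldsymbol{V}$ as in (\ref{A}); the weights are therefore $w_i(I-1,\boldsymbol{V})$, which is the assertion. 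The main obstacle is the first step: making the $\phi$-divergence Taylor arguments rigorous requires a genuine $O_P(n^{-1/2})$ bound, uniform in $i$, on $\widetilde{\boldsymbol{\pi}}-\pi_0\boldsymbol{1}_I$ under $H_0$ (from the $1$-Lipschitz property of PAVA) together with the total-preservation and block-wise Pythagorean identities for isotonic regression; everything beyond that is bookkeeping.
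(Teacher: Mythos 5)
Your argument is correct and reaches the same conclusion, but it pivots through a different intermediate object than the paper does. The paper's proof stays entirely in the $\boldsymbol{\theta}$-parametrization: it Taylor-expands $\mathrm{d}_{\phi}(\boldsymbol{p}(\boldsymbol{\theta}),\boldsymbol{p}(\widehat{\boldsymbol{\theta}}))$ about $\widehat{\boldsymbol{\theta}}$, using that the gradient vanishes there and the Hessian is $\phi^{\prime\prime}(1)\,\mathcal{I}_{F}^{(n_{1},\dots,n_{I})}(\widehat{\boldsymbol{\theta}})$, to get $T_{\phi}=D(\overline{\boldsymbol{\theta}},\widetilde{\boldsymbol{\theta}},\widehat{\boldsymbol{\theta}})+\mathrm{o}_{P}(1)$ and $S_{\phi}=H(\widetilde{\boldsymbol{\theta}},\widehat{\boldsymbol{\theta}})+\mathrm{o}_{P}(1)$ directly, then cites Proposition 4.4.1 and display (6.13) of Silvapulle and Sen (2005) for the equivalence $W=H=D=G^{2}+\mathrm{o}_{P}(1)$ and for the chi-bar-squared limit, obtaining $\boldsymbol{V}$ by block-multiplying $\boldsymbol{B}\,\mathrm{diag}^{-1}(\boldsymbol{\nu})\boldsymbol{B}^{T}$ with $\boldsymbol{B}=(\boldsymbol{G}_{I-1},-\boldsymbol{e}_{I-1})$. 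You instead work in the $\boldsymbol{\pi}$-parametrization and collapse everything onto Bartholomew's $X^{2}$, which forces you to make explicit the PAVA structure (nonexpansiveness, total preservation $\sum_{i}n_{i}\widetilde{\pi}_{i}=n\widehat{\pi}_{0}$, and the block-wise Pythagorean identity); these facts are true and your uses of them check out, including the identity $n\,\boldsymbol{d}^{T}\boldsymbol{\Sigma}_{\widehat{\boldsymbol{\nu}}^{\ast}}\boldsymbol{d}=\sum_{i}n_{i}(\widetilde{\pi}_{i}-\widehat{\pi}_{0})^{2}$ for $d_{i}=\widetilde{\pi}_{i}-\widetilde{\pi}_{I}$ and the Sherman--Morrison computation $\boldsymbol{\Sigma}_{\boldsymbol{\nu}^{\ast}}^{-1}=\mathrm{diag}^{-1}(\boldsymbol{\nu}^{\ast})+\nu_{I}^{-1}\boldsymbol{1}_{I-1}\boldsymbol{1}_{I-1}^{T}$ together with $\boldsymbol{G}_{I-1}\boldsymbol{1}_{I-1}=\boldsymbol{e}_{I-1}$, which recovers (\ref{A}) exactly. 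What your route buys is a fully explicit derivation of $\boldsymbol{V}$ (the paper leaves the final block multiplication implicit) and a concrete, checkable reduction of $T_{\phi}$ and $S_{\phi}$ to $X^{2}$ without differentiating the divergence in $\boldsymbol{\theta}$; what it costs is the extra isotonic-regression machinery, and note that you still lean on Silvapulle and Sen both for $H$ and $D$ and for the chi-bar-squared limit of the cone-restricted Wald statistic, so the external dependencies are ultimately the same as the paper's.
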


\begin{proof}
Let $\widehat{\boldsymbol{\theta}}$ be the $I$-dimensional vector given in
(\ref{thetaHat}). The second order Taylor expansion of function $\mathrm{d}%
_{\phi}(\boldsymbol{\theta})=\mathrm{d}_{\phi}(\boldsymbol{p}%
(\boldsymbol{\theta}),\boldsymbol{p}(\widehat{\boldsymbol{\theta}}))$ about
$\widehat{\boldsymbol{\theta}}$ is%
\begin{equation}
\mathrm{d}_{\phi}(\boldsymbol{\theta})=\mathrm{d}_{\phi}%
(\widehat{\boldsymbol{\theta}})+(\boldsymbol{\theta}%
-\widehat{\boldsymbol{\theta}})^{T}\left.  \frac{\partial}{\partial
\boldsymbol{\theta}}\mathrm{d}_{\phi}(\boldsymbol{\theta})\right\vert
_{\boldsymbol{\theta=}\widehat{\boldsymbol{\theta}}}+\frac{1}{2}%
(\boldsymbol{\theta}-\widehat{\boldsymbol{\theta}})^{T}\left.  \frac
{\partial^{2}}{\partial\boldsymbol{\theta}\partial\boldsymbol{\theta}^{T}%
}\mathrm{d}_{\phi}(\boldsymbol{\theta})\right\vert _{\boldsymbol{\theta
=}\widehat{\boldsymbol{\theta}}}(\boldsymbol{\theta}%
-\widehat{\boldsymbol{\theta}})+\mathrm{o}\left(  \left\Vert
\boldsymbol{\theta}-\widehat{\boldsymbol{\theta}}\right\Vert ^{2}\right)  ,
\label{eq16}%
\end{equation}
where%
\begin{align*}
\left.  \frac{\partial}{\partial\boldsymbol{\theta}}\mathrm{d}_{\phi
}(\boldsymbol{\theta})\right\vert _{\boldsymbol{\theta=}%
\widehat{\boldsymbol{\theta}}}  &  =\boldsymbol{0}_{I-1},\\
\left.  \frac{\partial^{2}}{\partial\boldsymbol{\theta}\partial
\boldsymbol{\theta}^{T}}\mathrm{d}_{\phi}(\boldsymbol{\theta})\right\vert
_{\boldsymbol{\theta=}\widehat{\boldsymbol{\theta}}}  &  =\phi^{\prime\prime
}\left(  1\right)  \mathcal{I}_{F}^{(n_{1},...,n_{I})}%
(\widehat{\boldsymbol{\theta}}),
\end{align*}
and
\[
\mathcal{I}_{F}^{(n_{1},...,n_{I})}(\boldsymbol{\theta})=\boldsymbol{X}%
^{T}\mathrm{diag}\{\tfrac{n_{i}}{n}\pi_{i}(1-\pi_{i})\}_{i=1}^{I}%
\boldsymbol{X}.
\]
Let $\overline{\boldsymbol{\theta}}=\boldsymbol{X}^{-1}\mathrm{logit}%
(\overline{\boldsymbol{\pi}})$, where $\overline{\boldsymbol{\pi}}$ is
(\ref{piBar}). In particular, for $\boldsymbol{\theta=}\overline
{\boldsymbol{\theta}}$ we have%
\[
\mathrm{d}_{\phi}(\boldsymbol{p}(\overline{\boldsymbol{\theta}}%
),\boldsymbol{p}(\widehat{\boldsymbol{\theta}}))=\frac{\phi^{\prime\prime
}\left(  1\right)  }{2}(\overline{\boldsymbol{\theta}}%
-\widehat{\boldsymbol{\theta}})^{T}\mathcal{I}_{F}^{(n_{1},...,n_{I}%
)}(\widehat{\boldsymbol{\theta}})(\overline{\boldsymbol{\theta}}%
-\widehat{\boldsymbol{\theta}})+\mathrm{o}\left(  \left\Vert \overline
{\boldsymbol{\theta}}-\widehat{\boldsymbol{\theta}}\right\Vert ^{2}\right)
\]
and for $\boldsymbol{\theta=}\widetilde{\boldsymbol{\theta}}$%
\[
\mathrm{d}_{\phi}(\boldsymbol{p}(\widetilde{\boldsymbol{\theta}}%
),\boldsymbol{p}(\widehat{\boldsymbol{\theta}}))=\frac{\phi^{\prime\prime
}\left(  1\right)  }{2}(\widetilde{\boldsymbol{\theta}}%
-\widehat{\boldsymbol{\theta}})^{T}\mathcal{I}_{F}^{(n_{1},...,n_{I}%
)}(\widehat{\boldsymbol{\theta}})(\widetilde{\boldsymbol{\theta}%
}-\widehat{\boldsymbol{\theta}})+\mathrm{o}\left(  \left\Vert
\widetilde{\boldsymbol{\theta}}-\widehat{\boldsymbol{\theta}}\right\Vert
^{2}\right)  ,
\]
and then taking into account that $\lim_{n\rightarrow\infty}\mathcal{I}%
_{F}^{(n_{1},...,n_{I})}(\boldsymbol{\theta})=\mathcal{I}_{F}%
(\boldsymbol{\theta})$,%
\begin{align*}
T_{\phi}(\overline{\boldsymbol{p}},\boldsymbol{p}%
(\widetilde{\boldsymbol{\theta}}),\boldsymbol{p}(\widehat{\boldsymbol{\theta}%
}))  &  =\frac{2n}{\phi^{\prime\prime}(1)}(\mathrm{d}_{\phi}(\overline
{\boldsymbol{p}},\boldsymbol{p}(\widehat{\boldsymbol{\theta}}))-\mathrm{d}%
_{\phi}(\overline{\boldsymbol{p}},\boldsymbol{p}(\widetilde{\boldsymbol{\theta
}})))\\
&  =n(\overline{\boldsymbol{\theta}}-\widehat{\boldsymbol{\theta}}%
)^{T}\mathcal{I}_{F}(\widehat{\boldsymbol{\theta}})(\overline
{\boldsymbol{\theta}}-\widehat{\boldsymbol{\theta}})-n(\overline
{\boldsymbol{\theta}}-\widetilde{\boldsymbol{\theta}})^{T}\mathcal{I}%
_{F}(\widetilde{\boldsymbol{\theta}})(\overline{\boldsymbol{\theta}%
}-\widetilde{\boldsymbol{\theta}})+\mathrm{o}_{P}\left(  1\right) \\
&  =D(\overline{\boldsymbol{\theta}},\widetilde{\boldsymbol{\theta}%
},\widehat{\boldsymbol{\theta}})+\mathrm{o}_{P}\left(  1\right)  .
\end{align*}
In a similar way it is obtained%
\[
\mathrm{d}_{\phi}(\boldsymbol{p}(\overline{\boldsymbol{\theta}}%
),\boldsymbol{p}(\widetilde{\boldsymbol{\theta}}))=\frac{\phi^{\prime\prime
}\left(  1\right)  }{2}(\overline{\boldsymbol{\theta}}%
-\widetilde{\boldsymbol{\theta}})^{T}\mathcal{I}_{F}^{(n_{1},...,n_{I}%
)}(\widetilde{\boldsymbol{\theta}})(\overline{\boldsymbol{\theta}%
}-\widetilde{\boldsymbol{\theta}})+\mathrm{o}\left(  \left\Vert \overline
{\boldsymbol{\theta}}-\widetilde{\boldsymbol{\theta}}\right\Vert ^{2}\right)
,
\]
and then taking into account that $\lim_{n\rightarrow\infty}\mathcal{I}%
_{F}^{(n_{1},...,n_{I})}(\boldsymbol{\theta})=\mathcal{I}_{F}%
(\boldsymbol{\theta})$,%
\begin{align*}
S_{\phi}(\boldsymbol{p}(\widetilde{\boldsymbol{\theta}}),\boldsymbol{p}%
(\widehat{\boldsymbol{\theta}}))  &  =\frac{2n}{\phi^{\prime\prime}(1)}%
d_{\phi}(\boldsymbol{p}(\widetilde{\boldsymbol{\theta}}),\boldsymbol{p}%
(\widehat{\boldsymbol{\theta}}))\\
&  =n(\widetilde{\boldsymbol{\theta}}-\widehat{\boldsymbol{\theta}}%
)^{T}\mathcal{I}_{F}(\widehat{\boldsymbol{\theta}}%
)(\widetilde{\boldsymbol{\theta}}-\widehat{\boldsymbol{\theta}})+\mathrm{o}%
_{P}\left(  1\right) \\
&  =H(\widetilde{\boldsymbol{\theta}},\widehat{\boldsymbol{\theta}%
})+\mathrm{o}_{P}\left(  1\right)
\end{align*}
According to Proposition 4.4.1 in Silvapulle and Sen (2005)%
\[
G^{2}=D(\overline{\boldsymbol{\theta}},\widetilde{\boldsymbol{\theta}%
},\widehat{\boldsymbol{\theta}})+\mathrm{o}_{P}\left(  1\right)
=H(\widetilde{\boldsymbol{\theta}},\widehat{\boldsymbol{\theta}}%
)+\mathrm{o}_{P}\left(  1\right)  =W(\widetilde{\boldsymbol{\theta}%
},\widehat{\boldsymbol{\theta}})+\mathrm{o}_{P}\left(  1\right)  ,
\]
which means that the asymptotic distribution of
\[
T\in\{T_{\phi}(\overline{\boldsymbol{p}},\boldsymbol{p}%
(\widetilde{\boldsymbol{\theta}}),\boldsymbol{p}(\widehat{\boldsymbol{\theta}%
})),S_{\phi}(\boldsymbol{p}(\widetilde{\boldsymbol{\theta}}),\boldsymbol{p}%
(\widehat{\boldsymbol{\theta}})),W(\widetilde{\boldsymbol{\theta}%
},\widehat{\boldsymbol{\theta}}),H(\widetilde{\boldsymbol{\theta}%
},\widehat{\boldsymbol{\theta}}),D(\overline{\boldsymbol{\theta}%
},\widetilde{\boldsymbol{\theta}},\widehat{\boldsymbol{\theta}})\}
\]
is common. Such a distribution can be established from the likelihood ratio
test-statistic used for the problem formulated in (6.13) of Silvapulle and Sen
(2005)%
\[
\lim_{n\rightarrow\infty}\Pr(G^{2}\leq x)=\sum_{i=0}^{I-1}w_{i}%
(I-1,\boldsymbol{V})\Pr(\chi_{i}^{2}\leq x),
\]
where%
\[
\boldsymbol{V}=\boldsymbol{B}\mathrm{diag}^{-1}(\boldsymbol{\nu}%
)\boldsymbol{B}^{T},
\]
with%
\begin{align*}
\boldsymbol{\nu}  &  =(\boldsymbol{\nu}^{\ast},\nu_{I})^{T}=(\nu_{1}%
,...,\nu_{I})^{T},\\
\boldsymbol{B}  &  =(\boldsymbol{G}_{I-1},-\boldsymbol{e}_{I-1}).
\end{align*}
Using the partitioned structure of $\boldsymbol{B}$\ and $\boldsymbol{\nu}$,
(\ref{A}) is obtained.
\end{proof}

The following result is based on the third way for computation of weights
given in page 79 of Silvapulle and Sen (2005).

\begin{theorem}
\label{prop1}The set of weights $\{w_{i}(I-1,\boldsymbol{V})\}_{i=0}^{I-1}$ of
the asymptotic distribution given in Theorem \ref{th1} is computed as follows%
\begin{equation}
w_{i}(I-1,\boldsymbol{V})=\Pr\left(  \arg\min_{\zeta\in%
\mathbb{R}
_{+}^{I-1}}(\boldsymbol{Z}-\zeta)^{T}\boldsymbol{V}^{-1}(\boldsymbol{Z}%
-\zeta)\in%
\mathbb{R}
_{+}^{I-1}(i)\right)  ,\quad i=0,...,I-1 \label{w}%
\end{equation}
$\boldsymbol{Z}\sim\mathcal{N}_{I-1}(\boldsymbol{0}_{I-1},\boldsymbol{V})$, $%
\mathbb{R}
_{+}^{I-1}=\{\zeta\in%
\mathbb{R}
_{+}^{I-1}:\zeta\geq0\}$ and $%
\mathbb{R}
_{+}^{I-1}(i)\subset%
\mathbb{R}
_{+}^{I-1}$ such that $i$ components are strictly positive and $I-1-i$
components are null. In particular, for $I\in\{2,3,4\}$ (\ref{w}) has the
explicit expressions given in (3.24), (3.25) and (3.26) of Silvapulle and Sen (2005).
\end{theorem}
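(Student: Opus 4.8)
The plan is to recognise that the weight sequence $\{w_i(I-1,\boldsymbol{V})\}_{i=0}^{I-1}$ appearing in Theorem~\ref{th1} is, by construction, the sequence of level probabilities of the chi-bar-squared distribution attached to the reparametrised cone-testing problem, and then to read off its value from the projection representation of those probabilities. First I would recall from the proof of Theorem~\ref{th1} (and from the reparametrisation in Section~\ref{Sec1b}) that, after the linear change of variable $\boldsymbol{\xi}=-\boldsymbol{R}\boldsymbol{\theta}$, the hypotheses (\ref{J1c}) become $H_0:\boldsymbol{\xi}=\boldsymbol{0}_{I-1}$ against $H_1:\boldsymbol{\xi}\in\mathbb{R}_+^{I-1}\setminus\{\boldsymbol{0}_{I-1}\}$, and that the corresponding suitably normalised estimator converges to $\mathcal{N}_{I-1}(\boldsymbol{\xi},\boldsymbol{V})$ with $\boldsymbol{V}$ as in (\ref{A}); thus the common limiting law of every $T$ in the list is the $\bar{\chi}^{2}(I-1,\boldsymbol{V})$ law $\sum_{i=0}^{I-1}w_i(I-1,\boldsymbol{V})\Pr(\chi_i^2\le x)$, and it remains only to compute the $w_i$.

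Next I would invoke the ``third way'' of computing those level probabilities, following page~79 of Silvapulle and Sen (2005). For $\boldsymbol{Z}\sim\mathcal{N}_{I-1}(\boldsymbol{0}_{I-1},\boldsymbol{V})$ let $\widehat{\zeta}(\boldsymbol{Z})=\arg\min_{\zeta\in\mathbb{R}_+^{I-1}}(\boldsymbol{Z}-\zeta)^T\boldsymbol{V}^{-1}(\boldsymbol{Z}-\zeta)$ be the metric projection of $\boldsymbol{Z}$ onto the polyhedral cone $\mathbb{R}_+^{I-1}$ in the $\boldsymbol{V}^{-1}$-geometry. The relative interiors of the $2^{I-1}$ faces of $\mathbb{R}_+^{I-1}$ partition the cone, and the faces with $i$ free (strictly positive) coordinates form the set $\mathbb{R}_+^{I-1}(i)$. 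Using the orthogonal (Moreau-type) decomposition $\boldsymbol{Z}^T\boldsymbol{V}^{-1}\boldsymbol{Z}=(\boldsymbol{Z}-\widehat{\zeta})^T\boldsymbol{V}^{-1}(\boldsymbol{Z}-\widehat{\zeta})+\widehat{\zeta}^{\,T}\boldsymbol{V}^{-1}\widehat{\zeta}$ together with the fact that, conditionally on $\widehat{\zeta}(\boldsymbol{Z})$ lying in a fixed $i$-dimensional face, $\widehat{\zeta}^{\,T}\boldsymbol{V}^{-1}\widehat{\zeta}$ is $\chi_i^2$-distributed and independent of which face is selected, one identifies the coefficient of $\Pr(\chi_i^2\le x)$ in the mixture with $\Pr(\widehat{\zeta}(\boldsymbol{Z})\in\mathbb{R}_+^{I-1}(i))$, which is exactly (\ref{w}).

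Finally, to obtain the closed forms for $I\in\{2,3,4\}$ I would specialise (\ref{w}) to dimension $I-1\in\{1,2,3\}$: for $I-1=1$ the cone is a half-line, so trivially $w_0=w_1=\tfrac12$; for $I-1=2$ and $I-1=3$ the orthant-projection probabilities reduce to the classical one- and two-dimensional Gaussian orthant probabilities, which can be written through $\arccos$ of correlation coefficients extracted from $\boldsymbol{V}$ and are recorded as (3.24), (3.25), (3.26) in Silvapulle and Sen (2005); plugging the entries of $\boldsymbol{V}$ from (\ref{A}) then yields the stated explicit expressions. The main obstacle is not any individual computation but the bookkeeping needed to certify that the reparametrised problem genuinely has $\mathbb{R}_+^{I-1}$ as its cone and $\boldsymbol{V}$ as its covariance, so that the general projection formula transfers verbatim; once that identification is secured the rest is a matter of quoting standard facts about $\bar{\chi}^{2}$ distributions.
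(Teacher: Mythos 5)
Your proposal is correct: it is the standard chi-bar-squared argument (Moreau decomposition of $\boldsymbol{Z}$ into its $\boldsymbol{V}^{-1}$-projection onto the cone and the residual, with the squared projection norm being $\chi^2_i$ conditionally on landing in an $i$-dimensional face), which is precisely the ``third way'' of Silvapulle and Sen (2005, p.~79) that the paper invokes. The paper itself offers no proof beyond that citation, so your reconstruction fills in the standard details behind the same reference rather than taking a different route.
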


For computing (\ref{w}) is useful to know the following explicit expression
$\boldsymbol{V}^{-1}=\boldsymbol{T}_{I-1}^{T}\boldsymbol{\Sigma}%
_{\boldsymbol{\nu}^{\ast}}\boldsymbol{T}_{I-1}$, where $\boldsymbol{T}%
_{I-1}=\boldsymbol{G}_{I-1}^{-1}$ is an upper triangular matrix of $1$-s,
$\boldsymbol{\Sigma}_{\boldsymbol{\nu}^{\ast}}=\mathrm{diag}(\boldsymbol{\nu
}^{\ast})-\boldsymbol{\nu}^{\ast}(\boldsymbol{\nu}^{\ast})^{T}$, and to
simulate the probability according to the following algorithm. For simulation
\begin{align}
\widehat{\boldsymbol{V}}  &  =\boldsymbol{G}_{I-1}\mathrm{diag}^{-1}%
(\widehat{\boldsymbol{\nu}}^{\ast})\boldsymbol{G}_{I-1}^{T}+\frac
{1}{\widehat{\nu}_{I}}\boldsymbol{e}_{I-1}\boldsymbol{e}_{I-1}^{T}%
,\label{VHat}\\
\widehat{\boldsymbol{V}}^{-1}  &  =\boldsymbol{T}_{I-1}^{T}\boldsymbol{\Sigma
}_{\widehat{\boldsymbol{\nu}}^{\ast}}\boldsymbol{T}_{I-1}, \label{VHarInv}%
\end{align}
\ are needed rather than $\boldsymbol{V}$ and $\boldsymbol{V}^{-1}$\ respectively.

\begin{algorithm}
[Estimation of weights]\label{AlgorW}The estimators of the weights given in
Theorem \ref{prop1}, $\widehat{w}_{i}=w_{i}(I-1,\widehat{\boldsymbol{V}})$,
are obtained by Monte Carlo in the following way:

\noindent\texttt{STEP 1: For }$i=0,...,I-1$\texttt{, do }$N(i):=0$%
\texttt{.}\newline\texttt{STEP 2: Repeat the following steps }$R$\texttt{ (say
}$R=1,000,000$\texttt{) times:}\newline\hspace*{0.75cm}\texttt{STEP 2.1:
Generate an observation, }$\boldsymbol{z}$\texttt{, from }$\boldsymbol{Z\sim
}\mathcal{N}_{I-1}(\boldsymbol{0}_{I-1},\widehat{\boldsymbol{V}})$\texttt{.
E.g., the\newline\hspace*{1cm}\hspace*{1cm}\hspace*{0.75cm}NAG Fortran library
subroutines G05CBF, G05EAF, and G05EZF can be useful.\newline\hspace
*{0.75cm}STEP 2.2: Compute }$\widehat{\boldsymbol{\zeta}}(\boldsymbol{z}%
)\boldsymbol{=}\arg\min_{\boldsymbol{\zeta\in}%
\mathbb{R}
_{+}^{I-1}}\tfrac{1}{2}\boldsymbol{\zeta}^{T}\widehat{\boldsymbol{V}}%
^{-1}\boldsymbol{\zeta}-(\widehat{\boldsymbol{V}}^{-1}\boldsymbol{z}%
)^{T}\boldsymbol{\zeta}$.\texttt{ E.g., the IMSL Fortran \newline\hspace
*{1cm}\hspace*{1cm}\hspace*{0.75cm}library subroutine DQPROG can be
useful.\newline\hspace*{0.75cm}STEP 2.3: Count }$i^{\ast}$\texttt{, the number
of strictly positive components contained in }$\widehat{\boldsymbol{\zeta}%
}(\boldsymbol{z})$\texttt{, and \newline\hspace*{1cm}\hspace*{1cm}%
\hspace*{0.75cm}do }$N(i^{\ast}):=N(i^{\ast})+1$.\newline\texttt{STEP 3: Do
}$\widehat{w}_{i}:=\frac{N(i)}{R}$ for $i=0,...,I-1$.
\end{algorithm}

\section{Example\label{Sec3}}

In this section the data set of the introduction (Table \ref{tt1}) is
analyzed. The sample, a realization of $\boldsymbol{N}$, is summarized in the
following vector%
\begin{align*}
\boldsymbol{n}  &  =(n_{11},n_{12},n_{21},n_{22},n_{31},n_{32},n_{41}%
,n_{42})^{T}\\
&  =(48,17066,38,14464,5,788,2,163)^{T}.
\end{align*}
The estimated vectors of interest are%
\begin{align*}
\overline{\boldsymbol{\pi}}  &  =\left(  \tfrac{48}{17114},\tfrac{38}%
{14502},\tfrac{5}{793},\tfrac{2}{165}\right)  ^{T}%
=(0.0028,0.0026,0.0063,0.0121)^{T},\\
\widetilde{\boldsymbol{\pi}}  &  =\left(  \tfrac{43}{15\,808},\tfrac
{43}{15\,808},\tfrac{5}{793},\tfrac{2}{165}\right)  ^{T}%
=(0.0027,0.0027,0.0063,0.0121)^{T},\\
\widehat{\pi}_{0}  &  =\tfrac{93}{32\,574}=0.0029,
\end{align*}
and%
\begin{align*}
\overline{\boldsymbol{\theta}}  &  =\boldsymbol{X}^{-1}\mathrm{logit}%
(\overline{\boldsymbol{\pi}})=(-4.\,\allowbreak400\,6,-1.\,\allowbreak
473,-1.\,\allowbreak541\,2,-0.659\,46)^{T},\\
\widetilde{\boldsymbol{\theta}}  &  =\boldsymbol{X}^{-1}\mathrm{logit}%
(\widetilde{\boldsymbol{\pi}})=(-4.\,\allowbreak400\,6,-1.\,\allowbreak
503\,7,-1.\,\allowbreak503\,7,-0.659\,46)^{T}\\
\widehat{\boldsymbol{\theta}}  &  =(\mathrm{logit}(\widehat{\pi}%
_{0}),0,0,0)^{T}=(-5.\allowbreak8558,0,0,0)^{T}.
\end{align*}
For the asymptotic distribution the weighs can be calculated though%
\[
\widehat{\boldsymbol{V}}=%
\begin{pmatrix}
83774.0156250 & -14428.7177734 & 0\\
-14428.7177734 & 15217.7109375 & -788.9927979\\
0 & -788.9927979 & 1457.5666504
\end{pmatrix}
,
\]
calculating the correlation coefficients%
\[
\widehat{\rho}_{12}=-0.40411,\quad\widehat{\rho}_{13}=0,\quad\widehat{\rho
}_{23}=-0.16753,
\]
the partial correlation coefficients%
\[
\widehat{\rho}_{12\bullet3}=-0.4099,\quad\widehat{\rho}_{13\bullet
2}=-0.07507\,2,\quad\widehat{\rho}_{23\bullet1}=-0.183\,15,
\]
and evaluating the following expressions%
\begin{align*}
\widehat{w}_{3}  &  =\frac{1}{4\pi}\left(  2\pi-\arccos\left(  \widehat{\rho
}_{12}\right)  -\arccos\left(  \widehat{\rho}_{13}\right)  -\arccos\left(
\widehat{\rho}_{23}\right)  \right)  =0.07850,\\
\widehat{w}_{2}  &  =\frac{1}{4\pi}\left(  3\pi-\arccos(\widehat{\rho
}_{12\bullet3})-\arccos(\widehat{\rho}_{13\bullet2})-\arccos(\widehat{\rho
}_{23\bullet1})\right)  =0.32075,\\
\widehat{w}_{1}  &  =0.5-w_{0}(\widehat{\boldsymbol{\theta}}%
)=0.5-0.07850=0.4215,\\
\widehat{w}_{0}  &  =0.5-w_{1}(\widehat{\boldsymbol{\theta}}%
)=0.5-0.32075=0.17925.
\end{align*}

If we take $\phi_{\lambda}(x)=\frac{1}{\lambda(1+\lambda)}(x^{\lambda
+1}-x-\lambda(x-1))$, where for each $\lambda\in%
\mathbb{R}
-\{-1,0\}$, the \textquotedblleft power divergence family\textquotedblright%
\ is obtained%
\begin{equation}
d_{\lambda}(\boldsymbol{p},\boldsymbol{q})=\frac{1}{\lambda(\lambda+1)}\left(
%
{\displaystyle\sum\limits_{i=1}^{I}}
{\displaystyle\sum\limits_{j=1}^{J}}
\frac{p_{ij}^{\lambda+1}}{q_{ij}^{\lambda}}-1\right)  \text{, for each
}\lambda\in%
\mathbb{R}
-\{-1,0\}\text{.} \label{CR}%
\end{equation}
It is also possible to cover the real line for $\lambda$, by defining
$d_{\lambda}(\boldsymbol{p},\boldsymbol{q})=\lim_{t\rightarrow\lambda}%
d_{t}(\boldsymbol{p},\boldsymbol{q})$, for $\lambda\in\{-1,0\}$. It is well
known that $d_{0}(\boldsymbol{p},\boldsymbol{q})=d_{Kull}(\boldsymbol{p}%
,\boldsymbol{q})$ and $d_{1}(\boldsymbol{p},\boldsymbol{q})=d_{Pearson}%
(\boldsymbol{p},\boldsymbol{q})$. This is very interesting since this means
that the power divergence based family of test-statistics contain as special
cases $G^{2}$ and $X^{2}$.

Finally, the expressions of the test-statistics are summarized in Table
\ref{tEst}. It can be seen that the null hypothesis cannot be rejected for
$W(\widetilde{\boldsymbol{\theta}},\widehat{\boldsymbol{\theta}})$,
$H(\widetilde{\boldsymbol{\theta}},\widehat{\boldsymbol{\theta}})$,
$D(\overline{\boldsymbol{\theta}},\widetilde{\boldsymbol{\theta}%
},\widehat{\boldsymbol{\theta}})$, $T_{-1.5}(\overline{\boldsymbol{p}%
},\boldsymbol{p}(\widetilde{\boldsymbol{\theta}}),\boldsymbol{p}%
(\widehat{\boldsymbol{\theta}}))$, $T_{-1}(\overline{\boldsymbol{p}%
},\boldsymbol{p}(\widetilde{\boldsymbol{\theta}}),\boldsymbol{p}%
(\widehat{\boldsymbol{\theta}}))$, $T_{-0.5}(\overline{\boldsymbol{p}%
},\boldsymbol{p}(\widetilde{\boldsymbol{\theta}}),\boldsymbol{p}%
(\widehat{\boldsymbol{\theta}}))$, $S_{-1.5}(\boldsymbol{p}%
(\widetilde{\boldsymbol{\theta}}),\boldsymbol{p}(\widehat{\boldsymbol{\theta}%
}))$, $S_{-1}(\boldsymbol{p}(\widetilde{\boldsymbol{\theta}}),\boldsymbol{p}%
(\widehat{\boldsymbol{\theta}}))$, $S_{-0.5}(\boldsymbol{p}%
(\widetilde{\boldsymbol{\theta}}),\boldsymbol{p}(\widehat{\boldsymbol{\theta}%
}))$ and should be rejected for $T_{0}(\overline{\boldsymbol{p}}%
,\boldsymbol{p}(\widetilde{\boldsymbol{\theta}}),\boldsymbol{p}%
(\widehat{\boldsymbol{\theta}}))$, $T_{\frac{2}{3}}(\overline{\boldsymbol{p}%
},\boldsymbol{p}(\widetilde{\boldsymbol{\theta}}),\boldsymbol{p}%
(\widehat{\boldsymbol{\theta}}))$, $T_{1}(\overline{\boldsymbol{p}%
},\boldsymbol{p}(\widetilde{\boldsymbol{\theta}}),\boldsymbol{p}%
(\widehat{\boldsymbol{\theta}}))$, $S_{0}(\boldsymbol{p}%
(\widetilde{\boldsymbol{\theta}}),\boldsymbol{p}(\widehat{\boldsymbol{\theta}%
}))$, $S_{\frac{2}{3}}(\boldsymbol{p}(\widetilde{\boldsymbol{\theta}%
}),\boldsymbol{p}(\widehat{\boldsymbol{\theta}}))$, $S_{1}(\boldsymbol{p}%
(\widetilde{\boldsymbol{\theta}}),\boldsymbol{p}(\widehat{\boldsymbol{\theta}%
}))$. Even though the sample size seems to be large enough, this is a case of
small values of $\pi_{1}$, $\pi_{2}$, $\pi_{3}$, $\pi_{4}$ which is known to
have not reliable behavior in the values calculated for the $p$-values in
order to make decisions. In the simulation study we shall study such a case
and according to the results the rejection of the null hypothesis is supported
since with $\{T_{\lambda}(\overline{\boldsymbol{p}},\boldsymbol{p}%
(\widetilde{\boldsymbol{\theta}}),\boldsymbol{p}(\widehat{\boldsymbol{\theta}%
}))\}_{\lambda\in\{0,\frac{2}{3},1\}}$\ are obtained the most realiable
test-statistics. As conclussion, an increase in maternal alcohol consumption
is associated with an increase in the probability of malformation.%

\begin{table}[htbp]  \tiny\tabcolsep2.8pt  \centering
$\
\begin{tabular}
[c]{c}%
\begin{tabular}
[c]{ccccccc}\hline\hline
test-statistic & $\hspace*{0.5cm}\lambda=-1.5\hspace*{0.5cm}$ & $\hspace
*{0.5cm}\lambda=-1\hspace*{0.5cm}$ & $\hspace*{0.5cm}\lambda=-0.5\hspace
*{0.5cm}$ & $\hspace*{0.5cm}\lambda=0\hspace*{0.5cm}$ & $\hspace
*{0.5cm}\lambda=\frac{2}{3}\hspace*{0.5cm}$ & $\hspace*{0.5cm}\lambda
=1\hspace*{0.5cm}$\\\hline
\multicolumn{1}{l}{$\overset{}{T}_{\lambda}(\overline{\boldsymbol{p}%
},\boldsymbol{p}(\widetilde{\boldsymbol{\theta}}),\boldsymbol{p}%
(\widehat{\boldsymbol{\theta}}))$} & $3.3068$ & $3.8173$ & $4.4920$ & $5.4057$
& $7.2076$ & $8.4895$\\
$p$\textrm{-}$\mathrm{value}(T_{\lambda})$ & $0.1177$ & $0.0911$ & $0.0650$ &
$0.0413$ & $0.0169$ & $0.0090$\\
\multicolumn{1}{l}{$\overset{}{S}_{\lambda}(\boldsymbol{p}%
(\widetilde{\boldsymbol{\theta}}),\boldsymbol{p}(\widehat{\boldsymbol{\theta}%
}))$} & $3.2993$ & $3.8124$ & $4.4896$ & $5.4057$ & $7.2107$ & $8.4942$\\
$p$\textrm{-}$\mathrm{value}(S_{\lambda})$ & $0.1181$ & $0.0913$ & $0.0651$ &
$0.0413$ & $0.0169$ & $0.0090$\\\hline\hline
\end{tabular}
\bigskip\\%
\begin{tabular}
[c]{cccccc}\hline\hline
$\overset{}{W}(\widetilde{\boldsymbol{\theta}},\widehat{\boldsymbol{\theta}})$
& $2.5979$ & $H(\widetilde{\boldsymbol{\theta}},\widehat{\boldsymbol{\theta}%
})$ & $2.6363$ & $D(\overline{\boldsymbol{\theta}}%
,\widetilde{\boldsymbol{\theta}},\widehat{\boldsymbol{\theta}})$ & $2.6462$\\
$p$\textrm{-}$\mathrm{value}(W(\widetilde{\boldsymbol{\theta}}%
,\widehat{\boldsymbol{\theta}}))$ & $0.1686$ & $p$\textrm{-}$\mathrm{value}%
(H(\widetilde{\boldsymbol{\theta}},\widehat{\boldsymbol{\theta}}))$ & $0.1653$
& $p$\textrm{-}$\mathrm{value}(D(\overline{\boldsymbol{\theta}}%
,\widetilde{\boldsymbol{\theta}},\widehat{\boldsymbol{\theta}}))$ &
$0.1645$\\\hline\hline
\end{tabular}
\end{tabular}
\ \ \ \ $%
\caption{Power divergence based test-statistics, Wald type statistics and their corresponding asymptotic p-values.\label{tEst}}%
\end{table}%

\section{Simulation study\label{Sec4}}

For testing (\ref{J1}), by considering $I=4$ binomial random variables, the
following scenarios will be considered:

\begin{itemize}
\item scenario \textbf{A }(small/big proportions): $n_{1}=40$, $n_{2}=30$,
$n_{3}=20$, $n_{4}=10$.

\begin{itemize}
\item scenario \textbf{A-0}: $\pi_{1}=\pi_{2}=\pi_{3}=\pi_{4}=0.05$.

\item scenario \textbf{A-1}: $\pi_{1}=0.05$, $\pi_{2}=\pi_{3}=\pi_{4}=0.1$.

\item scenario \textbf{A-2}: $\pi_{1}=0.05$, $\pi_{2}=0.1$, $\pi_{3}=\pi
_{4}=0.125$.

\item scenario \textbf{A-3}: $\pi_{1}=0.05$, $\pi_{2}=0.1$, $\pi_{3}=0.125$,
$\pi_{4}=0.135$.
\end{itemize}

\item scenario \textbf{B }(small/big proportions): $n_{1}=60$, $n_{2}=45$,
$n_{3}=30$, $n_{4}=15$.

\begin{itemize}
\item scenario \textbf{B-0}: $\pi_{1}=\pi_{2}=\pi_{3}=\pi_{4}=0.05$.

\item scenario \textbf{B-1}: $\pi_{1}=0.05$, $\pi_{2}=\pi_{3}=\pi_{4}=0.1$.

\item scenario \textbf{B-2}: $\pi_{1}=0.05$, $\pi_{2}=0.1$, $\pi_{3}=\pi
_{4}=0.125$.

\item scenario \textbf{B-3}: $\pi_{1}=0.05$, $\pi_{2}=0.1$, $\pi_{3}=0.125$,
$\pi_{4}=0.135$.
\end{itemize}

\item scenario \textbf{C }(small/big proportions): $n_{1}=100$, $n_{2}=75$,
$n_{3}=50$, $n_{4}=25$.

\begin{itemize}
\item scenario \textbf{C-0}: $\pi_{1}=\pi_{2}=\pi_{3}=\pi_{4}=0.05$.

\item scenario \textbf{C-1}: $\pi_{1}=0.05$, $\pi_{2}=\pi_{3}=\pi_{4}=0.1$.

\item scenario \textbf{C-2}: $\pi_{1}=0.05$, $\pi_{2}=0.1$, $\pi_{3}=\pi
_{4}=0.125$.

\item scenario \textbf{C-3}: $\pi_{1}=0.05$, $\pi_{2}=0.1$, $\pi_{3}=0.125$,
$\pi_{4}=0.135$.
\end{itemize}

\item scenario \textbf{D }(intermediate proportions): $n_{1}=40$, $n_{2}=30$,
$n_{3}=20$, $n_{4}=10$.

\begin{itemize}
\item scenario \textbf{D-0}: $\pi_{1}=\pi_{2}=\pi_{3}=\pi_{4}=0.35$.

\item scenario \textbf{D-1}: $\pi_{1}=0.35$, $\pi_{2}=\pi_{3}=\pi_{4}=0.45$.

\item scenario \textbf{D-2}: $\pi_{1}=0.35$, $\pi_{2}=0.45$, $\pi_{3}=\pi
_{4}=0.475$.

\item scenario \textbf{D-3}: $\pi_{1}=0.35$, $\pi_{2}=0.45$, $\pi_{3}=0.475$,
$\pi_{4}=0.485$.
\end{itemize}

\item scenario \textbf{E }(intermediate proportions): $n_{1}=60$, $n_{2}=45$,
$n_{3}=30$, $n_{4}=15$.

\begin{itemize}
\item scenario \textbf{E-0}: $\pi_{1}=\pi_{2}=\pi_{3}=\pi_{4}=0.35$.

\item scenario \textbf{E-1}: $\pi_{1}=0.35$, $\pi_{2}=\pi_{3}=\pi_{4}=0.45$.

\item scenario \textbf{E-2}: $\pi_{1}=0.35$, $\pi_{2}=0.45$, $\pi_{3}=\pi
_{4}=0.475$.

\item scenario \textbf{E-3}: $\pi_{1}=0.35$, $\pi_{2}=0.45$, $\pi_{3}=0.475$,
$\pi_{4}=0.485$.
\end{itemize}

\item scenario \textbf{F }(intermediate proportions): $n_{1}=100$, $n_{2}=75$,
$n_{3}=50$, $n_{4}=25$.

\begin{itemize}
\item scenario \textbf{F-0}: $\pi_{1}=\pi_{2}=\pi_{3}=\pi_{4}=0.35$.

\item scenario \textbf{F-1}: $\pi_{1}=0.35$, $\pi_{2}=\pi_{3}=\pi_{4}=0.45$.

\item scenario \textbf{F-2}: $\pi_{1}=0.35$, $\pi_{2}=0.45$, $\pi_{3}=\pi
_{4}=0.475$.

\item scenario \textbf{F-3}: $\pi_{1}=0.35$, $\pi_{2}=0.45$, $\pi_{3}=0.475$,
$\pi_{4}=0.485$.
\end{itemize}
\end{itemize}

The simulation experiment is performed with $R=50000$ replications\ and in
each of them, apart from the Wald type test-statistics $T^{(h)}\in
\{W^{(h)}(\widetilde{\boldsymbol{\theta}},\widehat{\boldsymbol{\theta}})$,
$H^{(h)}(\widetilde{\boldsymbol{\theta}},\widehat{\boldsymbol{\theta}})$,
$D^{(h)}(\overline{\boldsymbol{\theta}},\widetilde{\boldsymbol{\theta}%
},\widehat{\boldsymbol{\theta}})\}$, $h=1,...,R$, all the power divergence
test statistics, $T^{(h)}\in\{T_{\lambda}^{(h)}(\overline{\boldsymbol{p}%
},\boldsymbol{p}(\widetilde{\boldsymbol{\theta}}),\boldsymbol{p}%
(\widehat{\boldsymbol{\theta}})),S_{\lambda}^{(h)}(\boldsymbol{p}%
(\widetilde{\boldsymbol{\theta}}),\boldsymbol{p}(\widehat{\boldsymbol{\theta}%
}))\}_{\lambda\in I}$, $h=1,...,R$, associated with the interval $I=(-1.5,3)$
are considered. From the p-values it is possible to calculate the proportion
of replications rejected according with the nominal size $\alpha=0.05$, i.e.%
\begin{equation}
\frac{\sum_{h=1}^{R}I\{p\text{-}value(T^{(h)})\leq\alpha\}}{R}, \label{prop}%
\end{equation}
where $I\{\bullet\}$ represents the indicator function. The scenarios ending
in 0 represent that the null hypothesis is true, and are useful to obtain the
simulated significance levels, $\widehat{\alpha}_{T}$, $T\in
\{W(\widetilde{\boldsymbol{\theta}},\widehat{\boldsymbol{\theta}%
}),H(\widetilde{\boldsymbol{\theta}},\widehat{\boldsymbol{\theta}%
}),D(\overline{\boldsymbol{\theta}},\widetilde{\boldsymbol{\theta}%
},\widehat{\boldsymbol{\theta}}),T_{\lambda}(\overline{\boldsymbol{p}%
},\boldsymbol{p}(\widetilde{\boldsymbol{\theta}}),\boldsymbol{p}%
(\widehat{\boldsymbol{\theta}})),S_{\lambda}(\boldsymbol{p}%
(\widetilde{\boldsymbol{\theta}}),\boldsymbol{p}(\widehat{\boldsymbol{\theta}%
}))\}_{\lambda\in I}$, with different sample sizes and kinds of
test-statistics. The scenarios ending in either 1, 2 or 3 represent that the
null hypothesis is false and are useful to obtain the simulated powers,
$\widehat{\beta}_{T}$, with different alternatives, sample sizes and types of
test-statistics. For calculating both, $\widehat{\alpha}_{T}$ and
$\widehat{\beta}_{T}$, (\ref{prop}) is applied, each one in the corresponding scenario.

In Figures \ref{fig1} and \ref{fig2}, $\widehat{\alpha}_{T}$ and
$\widehat{\beta}_{T}$ for all the aforementioned test-statistics are plotted
in different scenarios. The curves represent either $T_{\lambda}%
(\overline{\boldsymbol{p}},\boldsymbol{p}(\widetilde{\boldsymbol{\theta}%
}),\boldsymbol{p}(\widehat{\boldsymbol{\theta}}))$\ or $S_{\lambda
}(\boldsymbol{p}(\widetilde{\boldsymbol{\theta}}),\boldsymbol{p}%
(\widehat{\boldsymbol{\theta}}))$\ power divergence test-statistics, located
respectively on left or right of the panel of plots. The asterisk, square and
circle symbols, represent $W(\widetilde{\boldsymbol{\theta}}%
,\widehat{\boldsymbol{\theta}})$, $H(\widetilde{\boldsymbol{\theta}%
},\widehat{\boldsymbol{\theta}})$ and $D(\overline{\boldsymbol{\theta}%
},\widetilde{\boldsymbol{\theta}},\widehat{\boldsymbol{\theta}})$ respectively
and all of them are repeated on the left as well as on the right in order to
make easier their comparison with $T_{\lambda}(\overline{\boldsymbol{p}%
},\boldsymbol{p}(\widetilde{\boldsymbol{\theta}}),\boldsymbol{p}%
(\widehat{\boldsymbol{\theta}}))$\ or $S_{\lambda}(\boldsymbol{p}%
(\widetilde{\boldsymbol{\theta}}),\boldsymbol{p}(\widehat{\boldsymbol{\theta}%
}))$\ respectively. The black color lines and symbols, representing
$\widehat{\alpha}_{T}$, are useful to select the test-statistics closed to
nominal level According to the criterion given by Dale (1986), a reasonable
exact significance level should verify%
\[
\left\vert \mathrm{logit}(1-\widehat{\alpha}_{T})-\mathrm{logit}%
(1-\alpha)\right\vert \leq\epsilon
\]
for $\epsilon\in\{0.35,0.7\}$, being \textquotedblleft closed to nominal
level\textquotedblright\ the exact significance levels verifying the
inequality with $\epsilon=0.35$ and \textquotedblleft fairly closed to nominal
level\textquotedblright\ the ones with $\epsilon=0.7$. In this study only the
test statistics satisfying the condition with $\epsilon=0.35$ are considered,
and the corresponding upper and lower bounds appear plotted with two
horizontal lines, having in the middle the line associated with the nominal
level, $\alpha=0.05$. Among the test-statistics with simulated significance
levels closed to the nominal level, the test-statistics with higher powers
should be selected but since in general high powers correspond to high
significance levels, this choice is not straightforward. For this reason,
based on $\widehat{\beta}_{T_{0}}-\widehat{\alpha}_{T_{0}}$ or $\widehat{\beta
}_{S_{1}}-\widehat{\alpha}_{S_{1}}$ as baseline, the efficiencies relative to
the likelihood ratio test ($T_{0}(\overline{\boldsymbol{p}},\boldsymbol{p}%
(\widetilde{\boldsymbol{\theta}}),\boldsymbol{p}(\widehat{\boldsymbol{\theta}%
}))=G^{2}$)
\begin{equation}
\rho_{T}=\frac{\left(  \widehat{\beta}_{T}-\widehat{\alpha}_{T}\right)
-\left(  \widehat{\beta}_{T_{0}}-\widehat{\alpha}_{T_{0}}\right)
}{\widehat{\beta}_{T_{0}}-\widehat{\alpha}_{T_{0}}} \label{ro}%
\end{equation}
or the Bartholomew's test ($S_{1}(\boldsymbol{p}(\widetilde{\boldsymbol{\theta
}}),\boldsymbol{p}(\widehat{\boldsymbol{\theta}}))=X^{2}$)
\begin{equation}
\rho_{T}^{\ast}=\frac{\left(  \widehat{\beta}_{T}-\widehat{\alpha}_{T}\right)
-\left(  \widehat{\beta}_{S_{1}}-\widehat{\alpha}_{S_{1}}\right)
}{\widehat{\beta}_{S_{1}}-\widehat{\alpha}_{S_{1}}} \label{ro*}%
\end{equation}
are considered. In Table \ref{table1}, the efficiency of $S_{1}=X^{2}$ is
compared with respect to $T_{0}=G^{2}$, and then if $\rho_{S_{1}}<0$, since
$T_{0}=G^{2}$ is better than $S_{1}=X^{2}$, the plot of the efficiencies in
Figures \ref{fig3} and \ref{fig4} will be only focussed on (\ref{ro}).
Similarly, if $\rho_{S_{1}}>0$, since $T_{0}=G^{2}$ is worse than $S_{1}%
=X^{2}$, the plot of the efficiencies in Figures \ref{fig3} and \ref{fig4}
will be only focussed on (\ref{ro*}).

\begin{center}%
\begin{table}[htbp]  \centering
\begin{tabular}
[c]{ccccccc}\hline
& sc A & sc B & sc C & sc D & sc E & sc F\\\hline
1 & $\rho_{S_{1}}<0$ & $\rho_{S_{1}}<0$ & $\rho_{S_{1}}<0$ & $\rho_{S_{1}}<0$
& $\rho_{S_{1}}<0$ & $\rho_{S_{1}}<0$\\
2 & $\rho_{S_{1}}<0$ & $\rho_{S_{1}}>0$ & $\rho_{S_{1}}<0$ & $\rho_{S_{1}}<0$
& $\rho_{S_{1}}<0$ & $\rho_{S_{1}}<0$\\
3 & $\rho_{S_{1}}<0$ & $\rho_{S_{1}}>0$ & $\rho_{S_{1}}<0$ & $\rho_{S_{1}}<0$
& $\rho_{S_{1}}<0$ & $\rho_{S_{1}}<0$\\\hline
\end{tabular}
\caption{Efficiency of the Bartholomew's test with respect to the likelihood ratio test.\label{table1}}%
\end{table}%

\end{center}

%

\begin{figure}[htbp]  \centering
\begin{tabular}
[c]{cc}%
$\{T_{\lambda}(\overline{\boldsymbol{p}},\boldsymbol{p}%
(\widetilde{\boldsymbol{\theta}}),\boldsymbol{p}(\widehat{\boldsymbol{\theta}%
}))\}_{\lambda\in(-1.5,3)}$ & $\{S_{\lambda}(\boldsymbol{p}%
(\widetilde{\boldsymbol{\theta}}),\boldsymbol{p}(\widehat{\boldsymbol{\theta}%
}))\}_{\lambda\in(-1.5,3)}$\\
\multicolumn{2}{c}{$W(\widetilde{\boldsymbol{\theta}}%
,\widehat{\boldsymbol{\theta}})$, $H(\widetilde{\boldsymbol{\theta}%
},\widehat{\boldsymbol{\theta}})$, $D(\overline{\boldsymbol{\theta}%
},\widetilde{\boldsymbol{\theta}},\widehat{\boldsymbol{\theta}})$}\\
\multicolumn{2}{c}{sc A-0 (black), sc A-1 (red), sc A-2 (green), sc A-3
(blue)}\\%
{\includegraphics[
height=2.2943in,
width=2.9153in
]%
{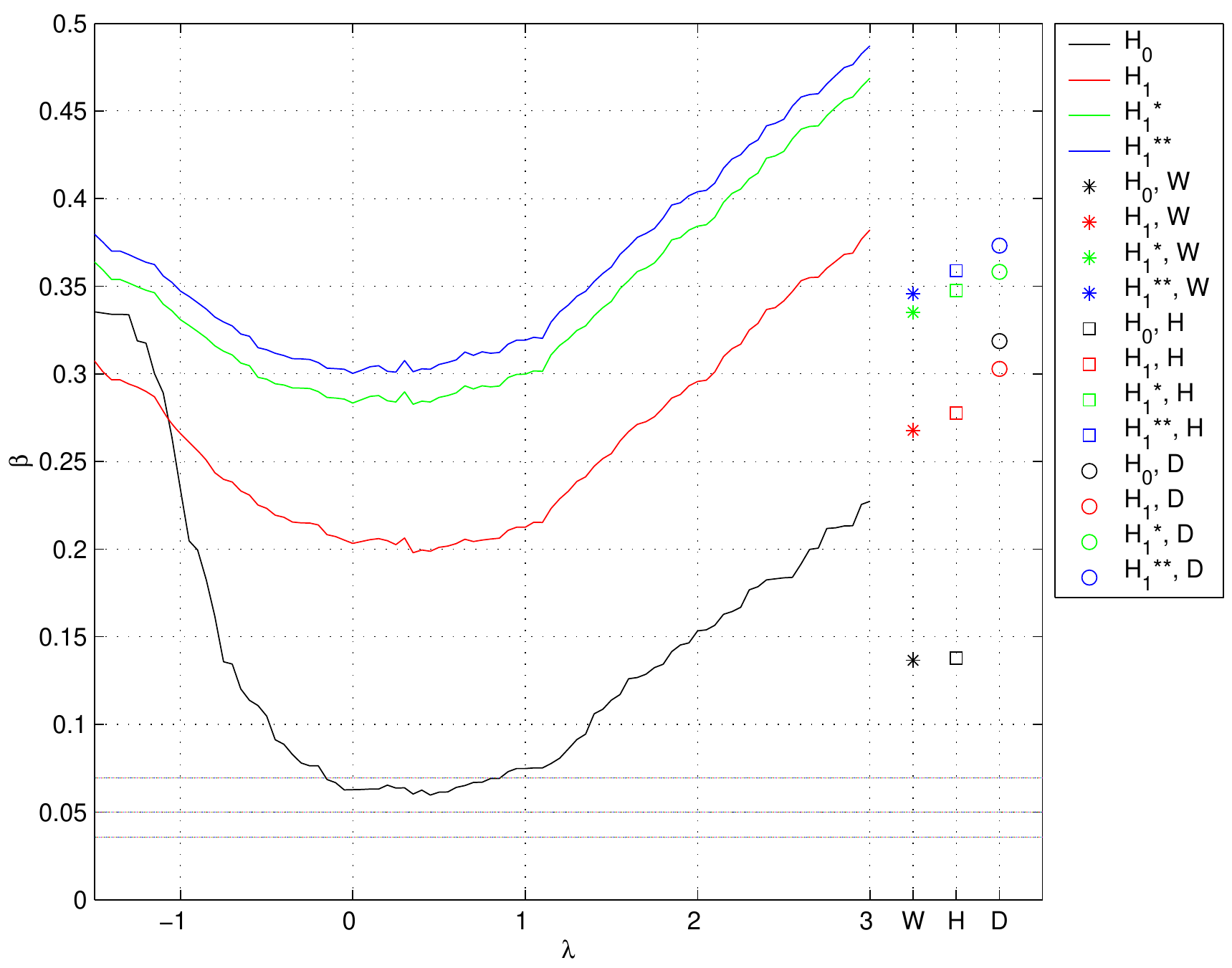}%
}
&
{\includegraphics[
height=2.2943in,
width=2.9153in
]%
{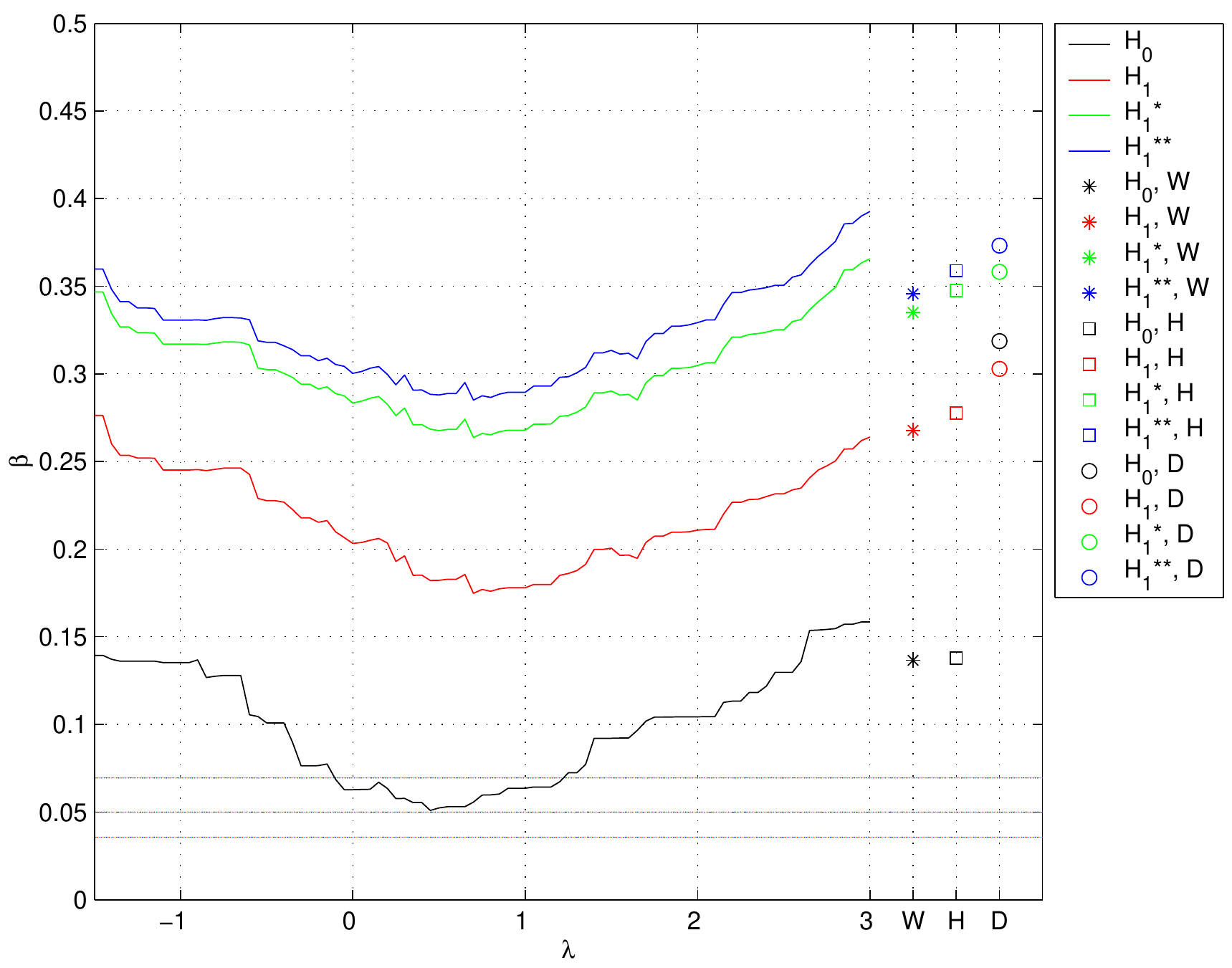}%
}
\\
\multicolumn{2}{c}{sc B-0 (black), sc B-1 (red), sc B-2 (green), sc B-3
(blue)}\\%
{\includegraphics[
height=2.2943in,
width=2.879in
]%
{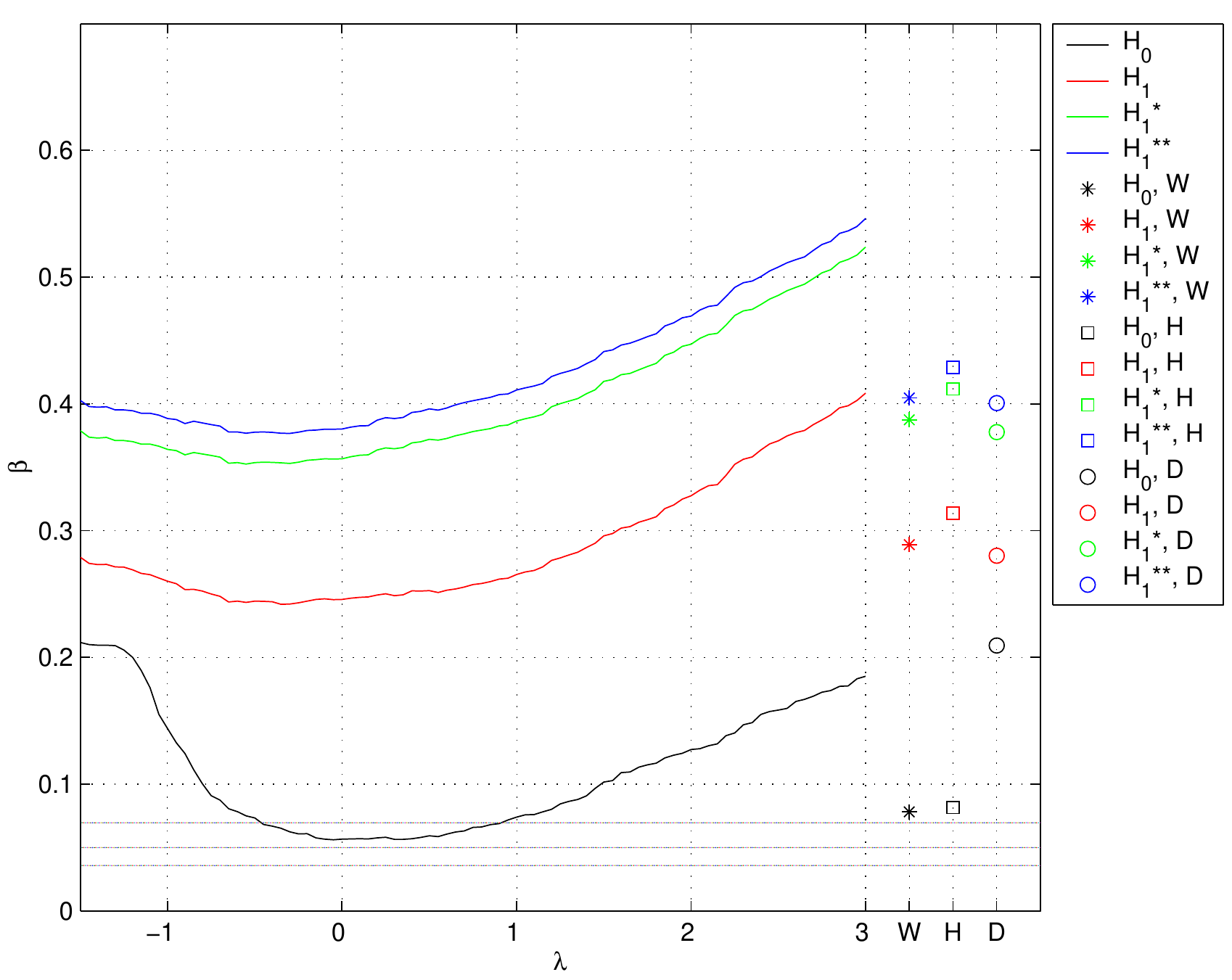}%
}
&
{\includegraphics[
height=2.2943in,
width=2.879in
]%
{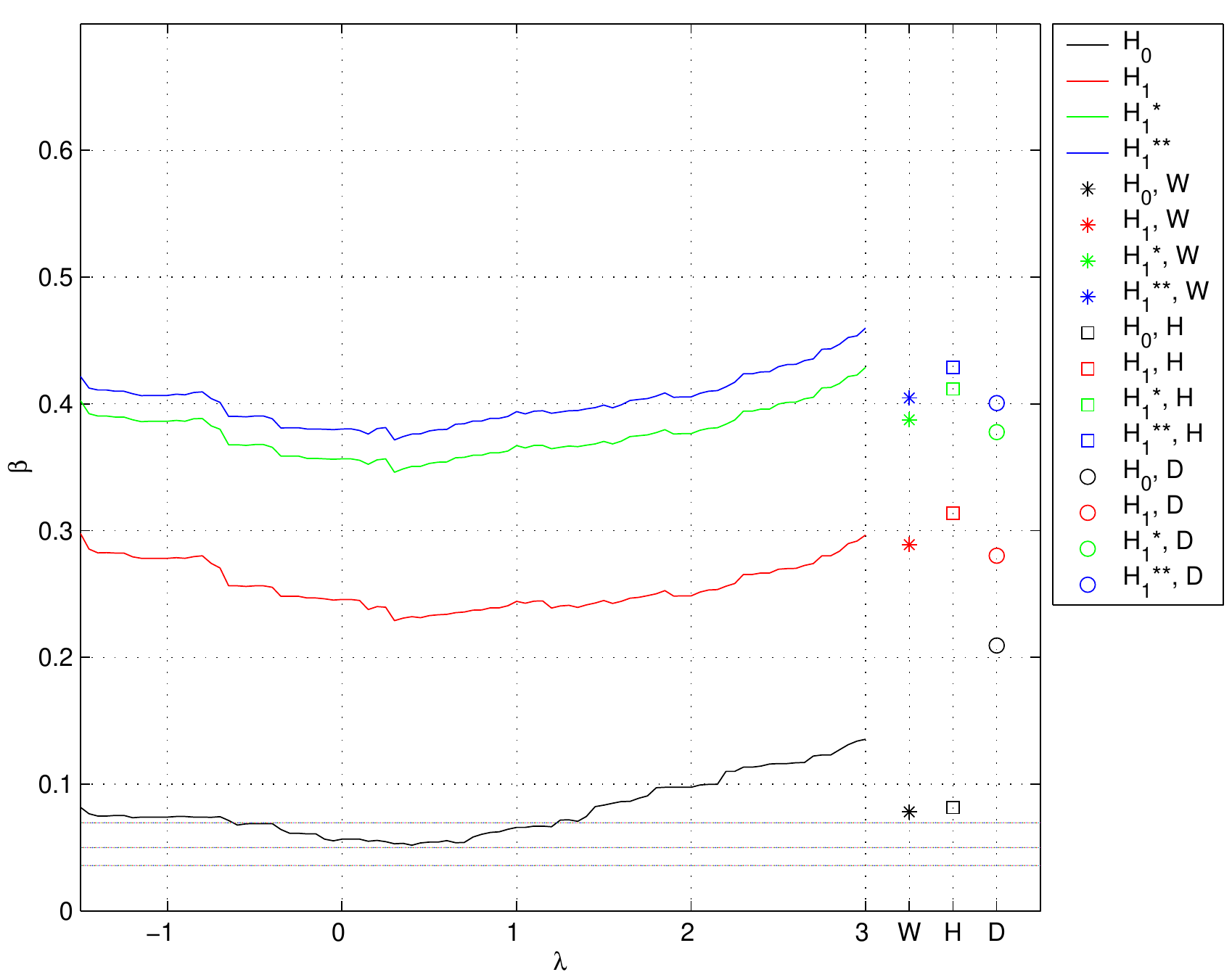}%
}
\\
\multicolumn{2}{c}{sc C-0 (black), sc C-1 (red), sc C-2 (green), sc C-3
(blue)}\\%
{\includegraphics[
height=2.2943in,
width=2.879in
]%
{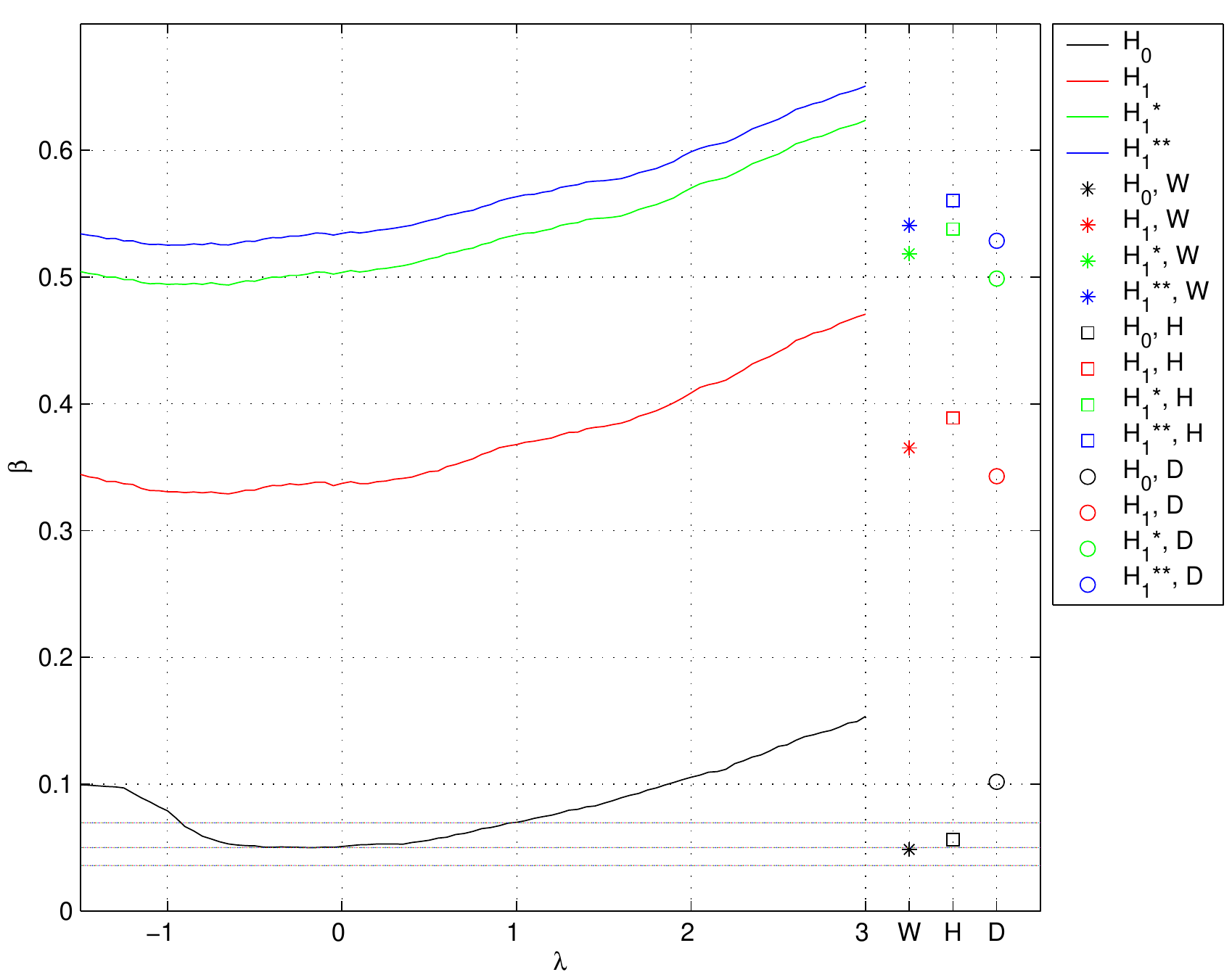}%
}
&
{\includegraphics[
height=2.2943in,
width=2.879in
]%
{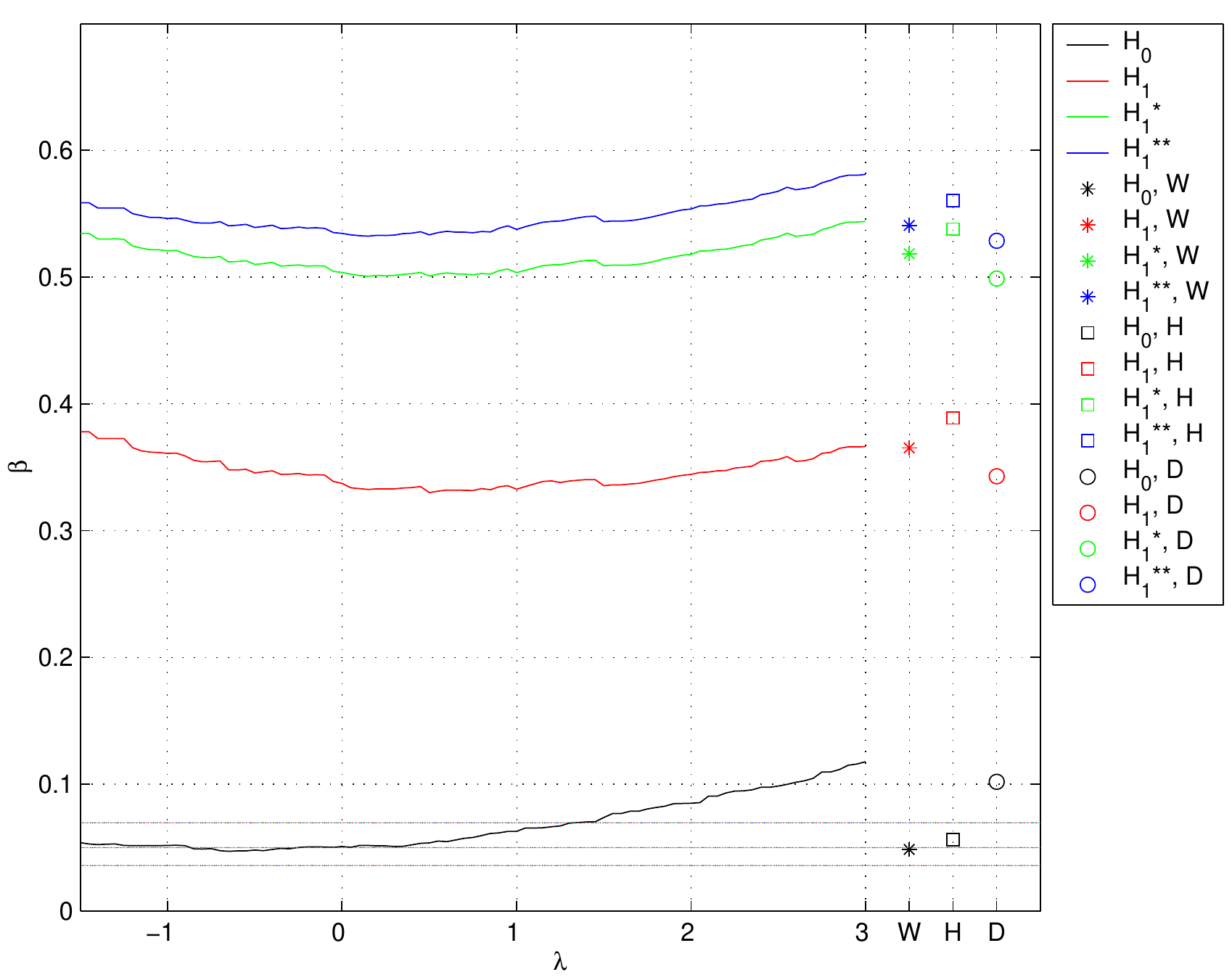}%
}
\end{tabular}
\caption{Simulated sizes (black) and powers (red, green, blue) for scenarios A,B,C (small/big proportions).\label{fig1}}%
\end{figure}%
%

\begin{figure}[htbp]  \centering
\begin{tabular}
[c]{cc}%
$\{T_{\lambda}(\overline{\boldsymbol{p}},\boldsymbol{p}%
(\widetilde{\boldsymbol{\theta}}),\boldsymbol{p}(\widehat{\boldsymbol{\theta}%
}))\}_{\lambda\in(-1.5,3)}$ & $\{S_{\lambda}(\boldsymbol{p}%
(\widetilde{\boldsymbol{\theta}}),\boldsymbol{p}(\widehat{\boldsymbol{\theta}%
}))\}_{\lambda\in(-1.5,3)}$\\
\multicolumn{2}{c}{$W(\widetilde{\boldsymbol{\theta}}%
,\widehat{\boldsymbol{\theta}})$, $H(\widetilde{\boldsymbol{\theta}%
},\widehat{\boldsymbol{\theta}})$, $D(\overline{\boldsymbol{\theta}%
},\widetilde{\boldsymbol{\theta}},\widehat{\boldsymbol{\theta}})$}\\
\multicolumn{2}{c}{sc D-0 (black), sc D-1 (red), sc D-2 (green), sc D-3
(blue)}\\%
{\includegraphics[
height=2.2943in,
width=2.9144in
]%
{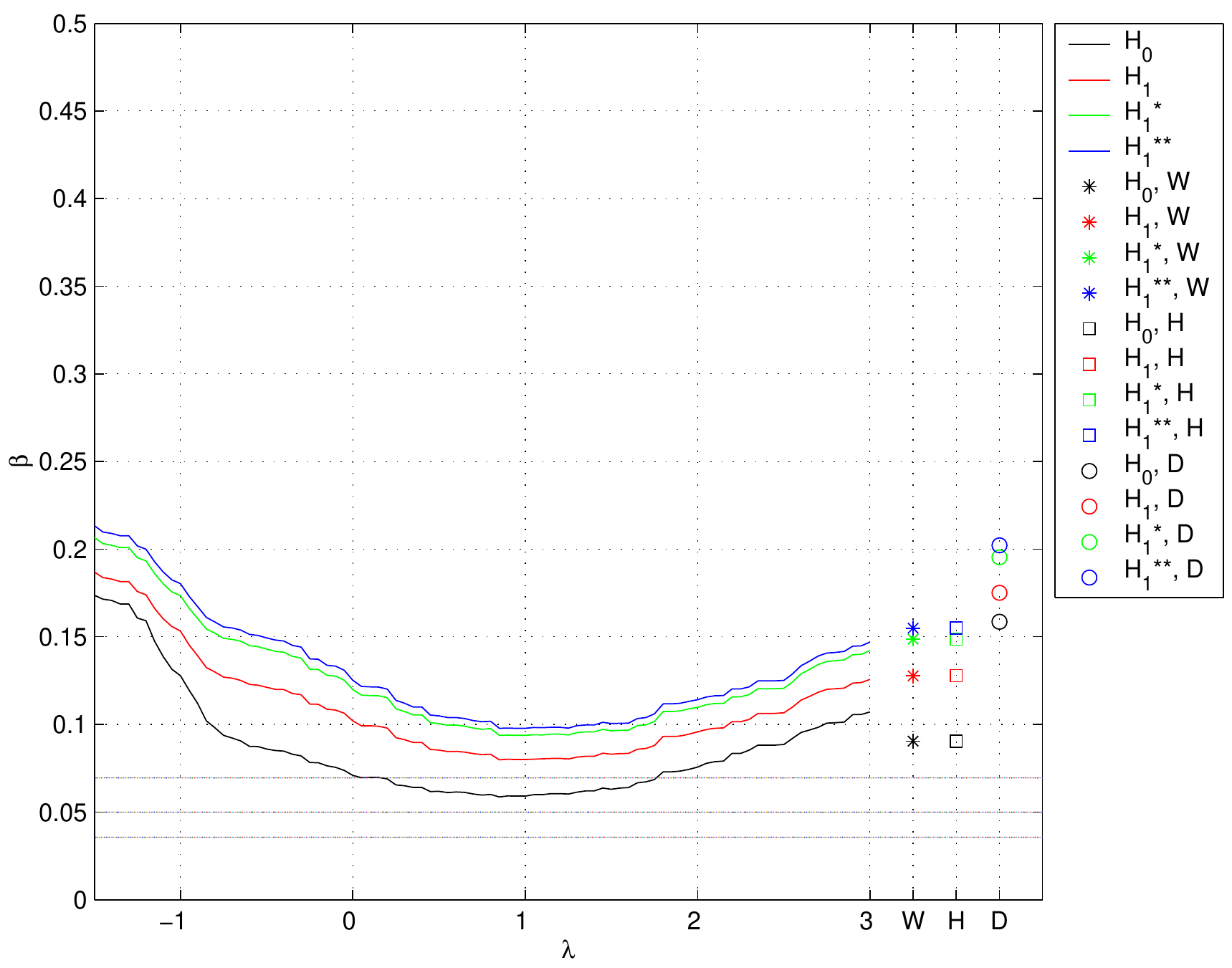}%
}
&
{\includegraphics[
height=2.2943in,
width=2.9144in
]%
{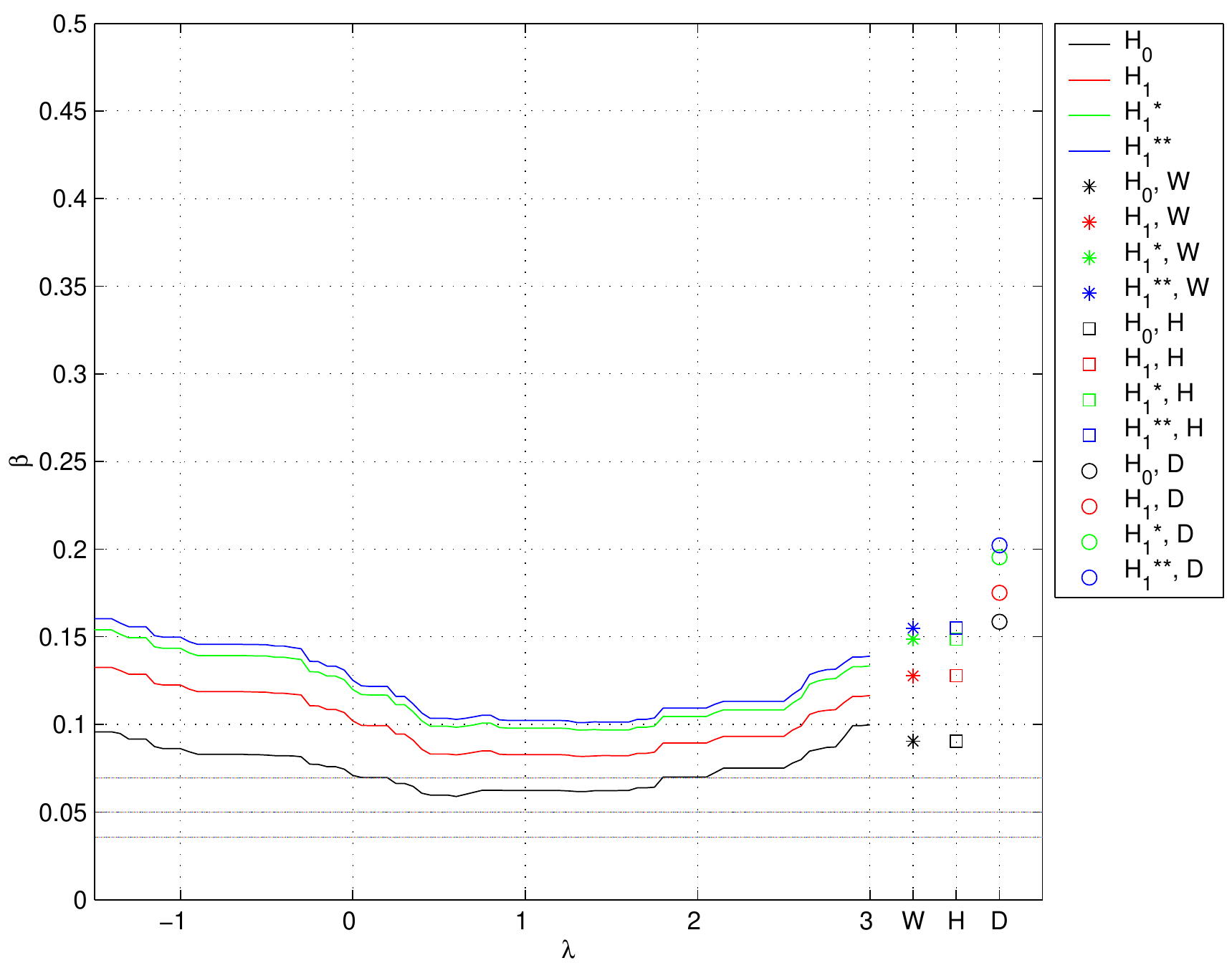}%
}
\\
\multicolumn{2}{c}{sc E-0 (black), sc E-1 (red), sc E-2 (green), sc E-3
(blue)}\\%
{\includegraphics[
height=2.2943in,
width=2.9144in
]%
{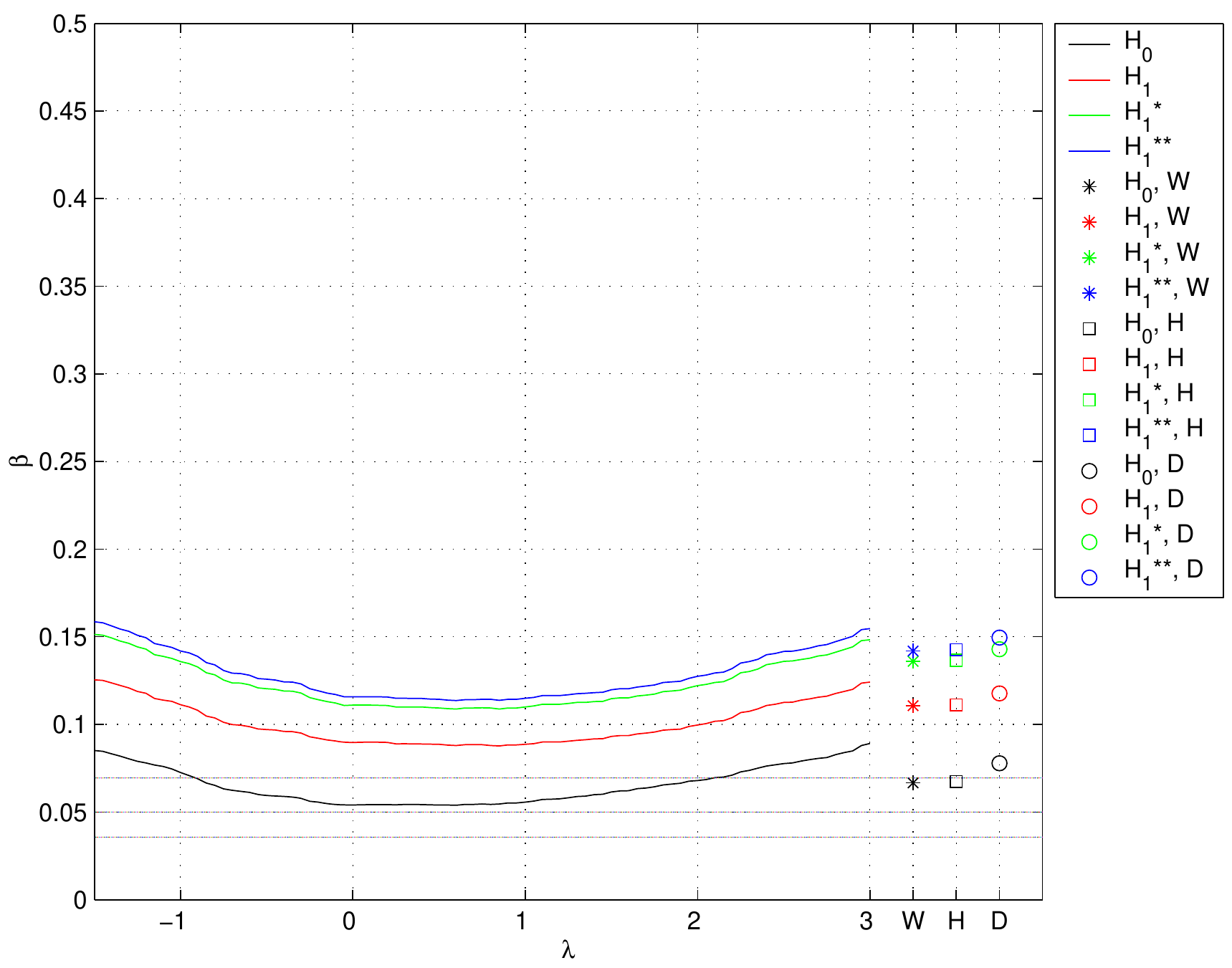}%
}
&
{\includegraphics[
height=2.2943in,
width=2.9144in
]%
{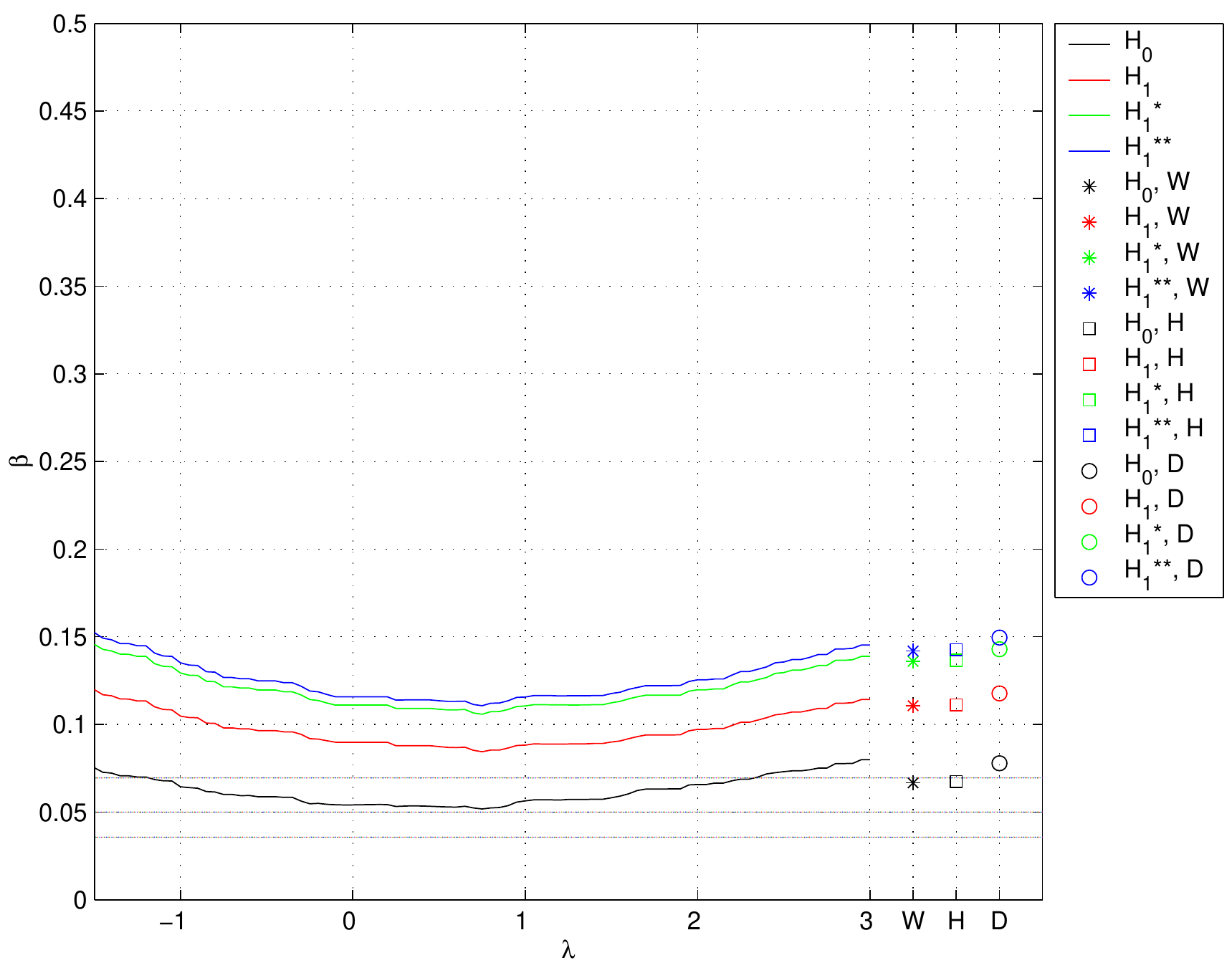}%
}
\\
\multicolumn{2}{c}{sc F-0 (black), sc F-1 (red), sc F-2 (green), sc F-3
(blue)}\\%
{\includegraphics[
height=2.2943in,
width=2.9144in
]%
{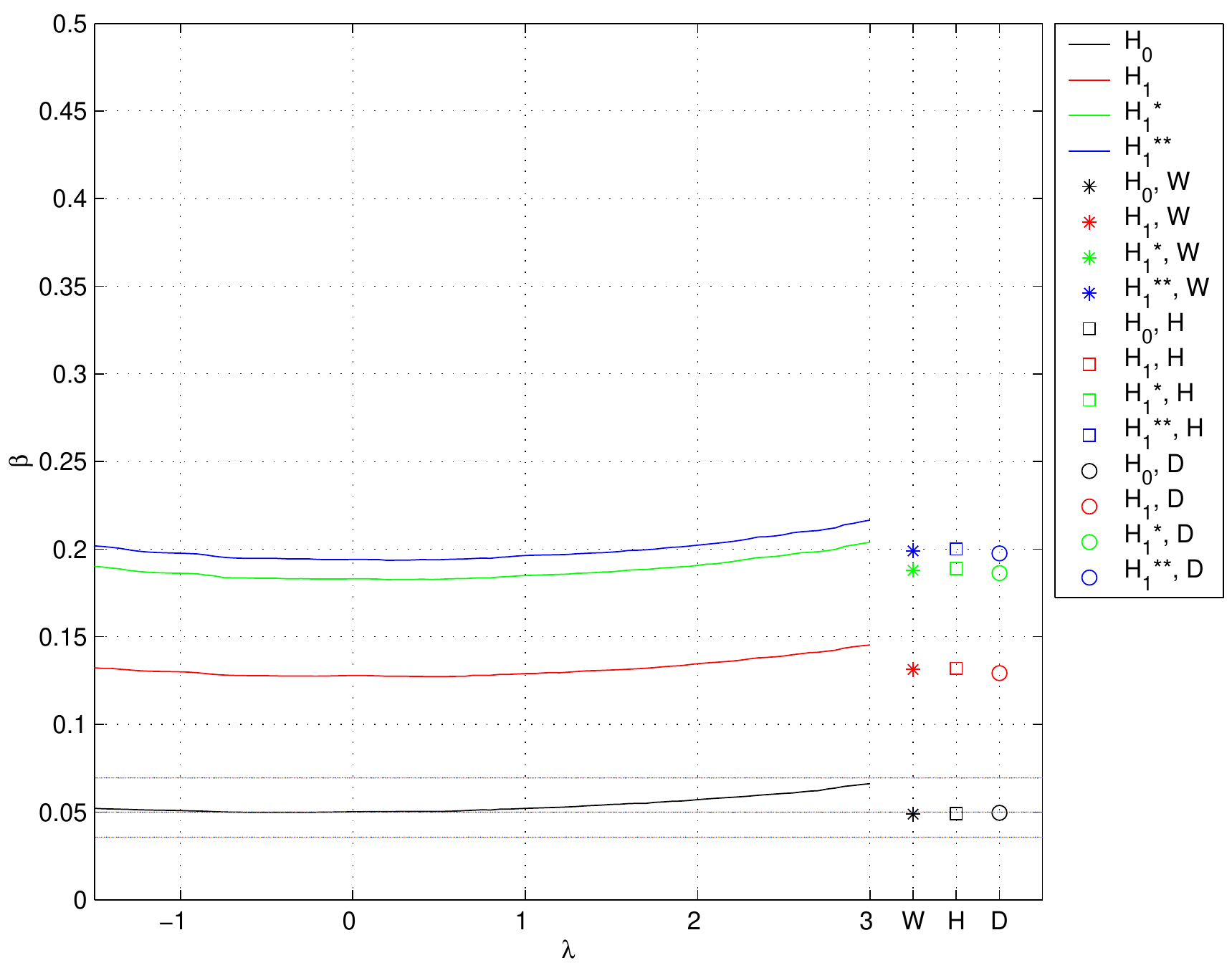}%
}
&
{\includegraphics[
height=2.2943in,
width=2.9144in
]%
{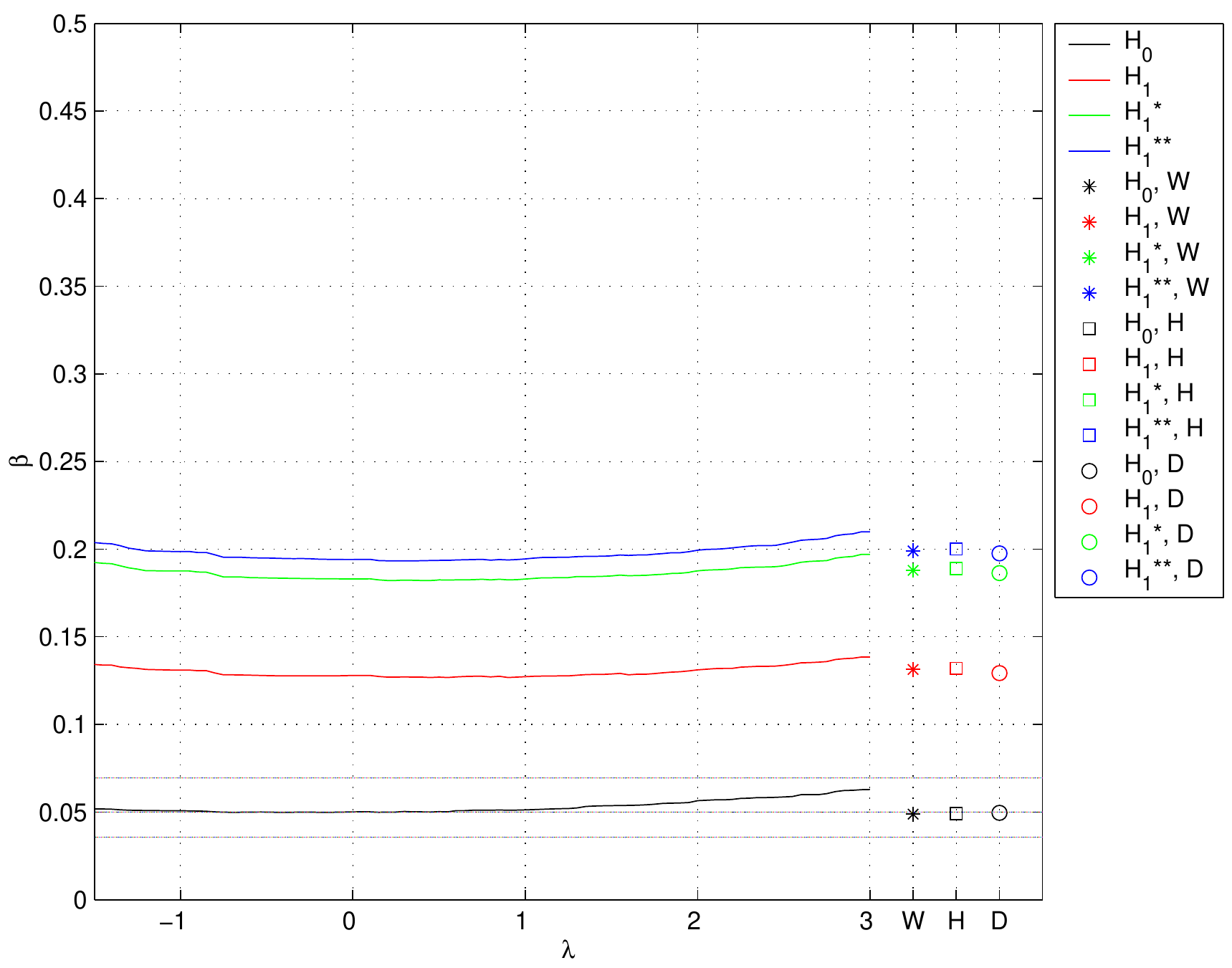}%
}
\end{tabular}
\caption{Simulated sizes (black) and powers (red, green, blue) for scenarios D,E,F (intermediate proportions).\label{fig2}}%
\end{figure}%
%

\begin{figure}[htbp]  \centering
\begin{tabular}
[c]{cc}%
$\{T_{\lambda}(\overline{\boldsymbol{p}},\boldsymbol{p}%
(\widetilde{\boldsymbol{\theta}}),\boldsymbol{p}(\widehat{\boldsymbol{\theta}%
}))\}_{\lambda\in(-1.5,3)}$ & $\{S_{\lambda}(\boldsymbol{p}%
(\widetilde{\boldsymbol{\theta}}),\boldsymbol{p}(\widehat{\boldsymbol{\theta}%
}))\}_{\lambda\in(-1.5,3)}$\\
\multicolumn{2}{c}{$W(\widetilde{\boldsymbol{\theta}}%
,\widehat{\boldsymbol{\theta}})$, $H(\widetilde{\boldsymbol{\theta}%
},\widehat{\boldsymbol{\theta}})$, $D(\overline{\boldsymbol{\theta}%
},\widetilde{\boldsymbol{\theta}},\widehat{\boldsymbol{\theta}})$}\\
\multicolumn{2}{c}{sc A-0 (black), sc A-1 (red), sc A-2 (green), sc A-3
(blue)}\\%
{\includegraphics[
height=2.2943in,
width=2.9144in
]%
{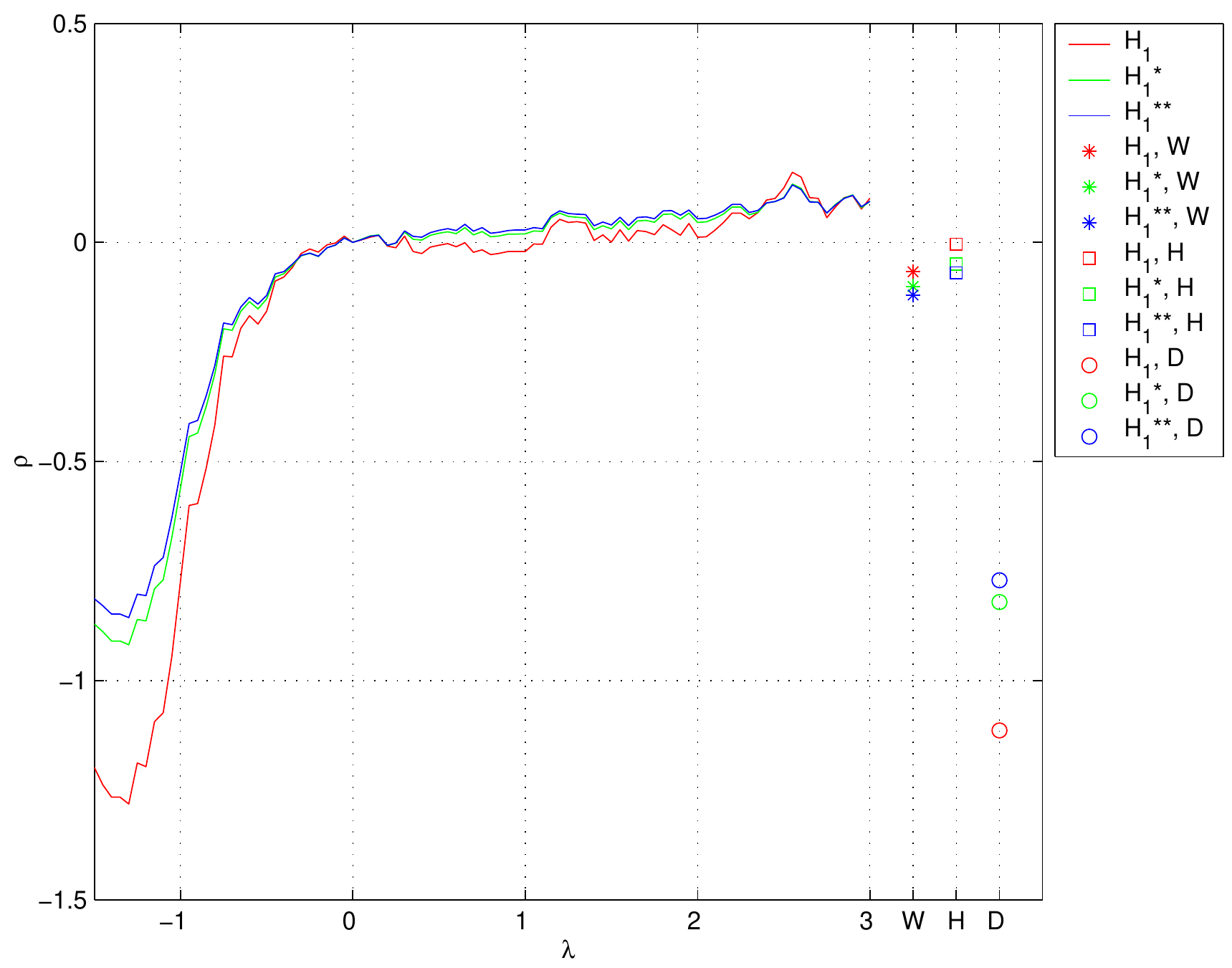}%
}
&
{\includegraphics[
height=2.2943in,
width=2.9144in
]%
{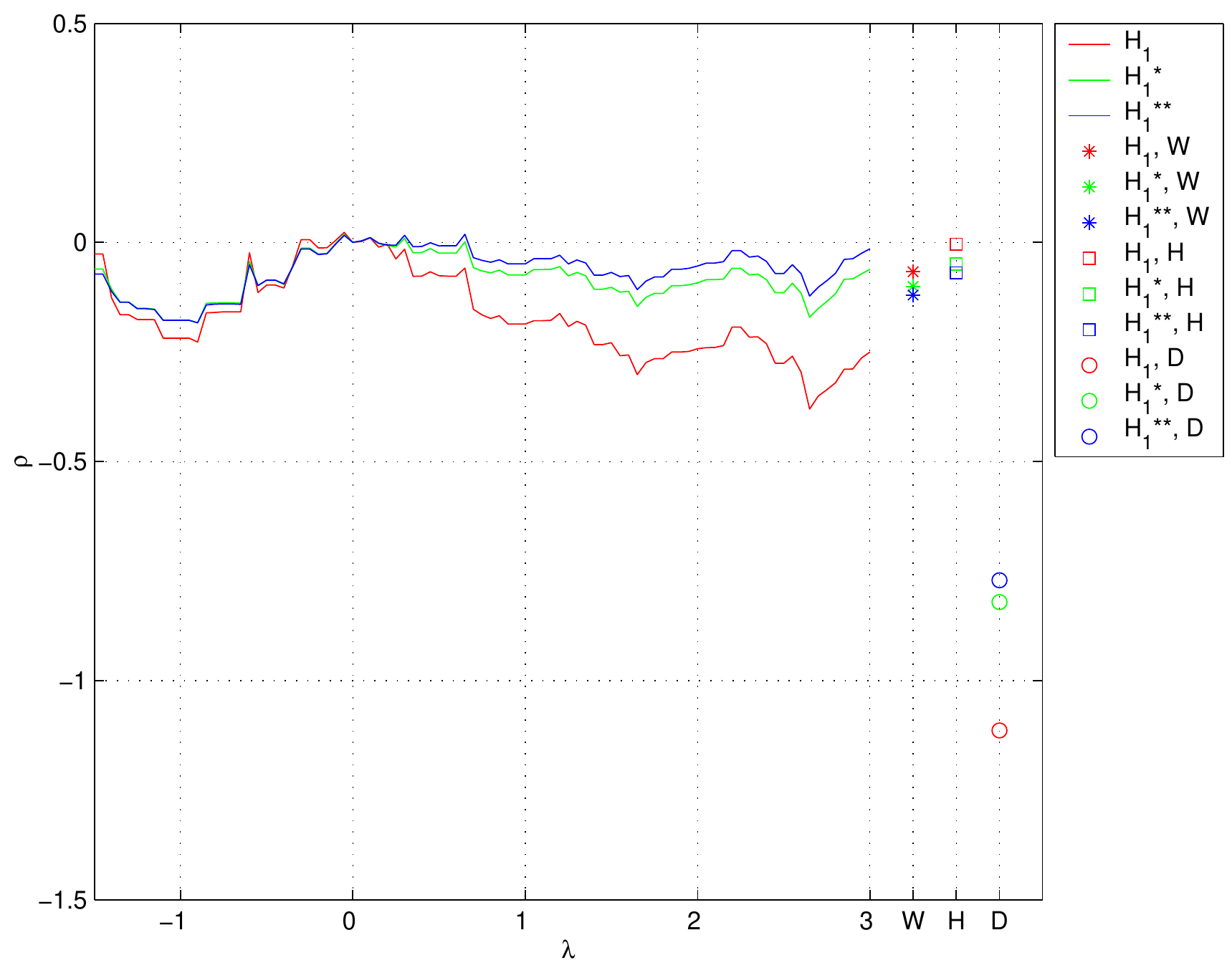}%
}
\\
\multicolumn{2}{c}{sc B-0 (black), sc B-1 (red), sc B-2 (green), sc B-3
(blue)}\\%
{\includegraphics[
height=2.2943in,
width=2.9144in
]%
{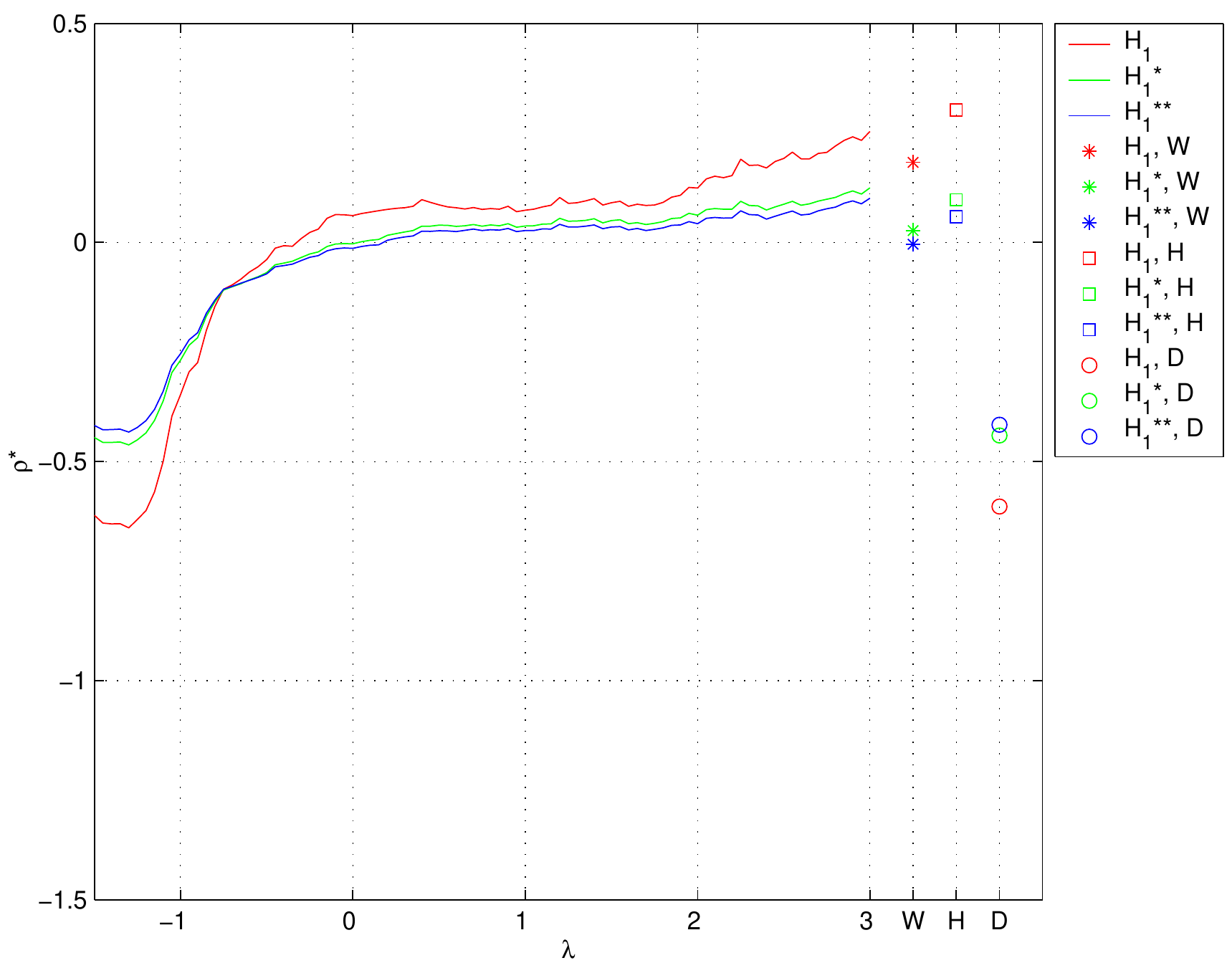}%
}
&
{\includegraphics[
height=2.2943in,
width=2.9144in
]%
{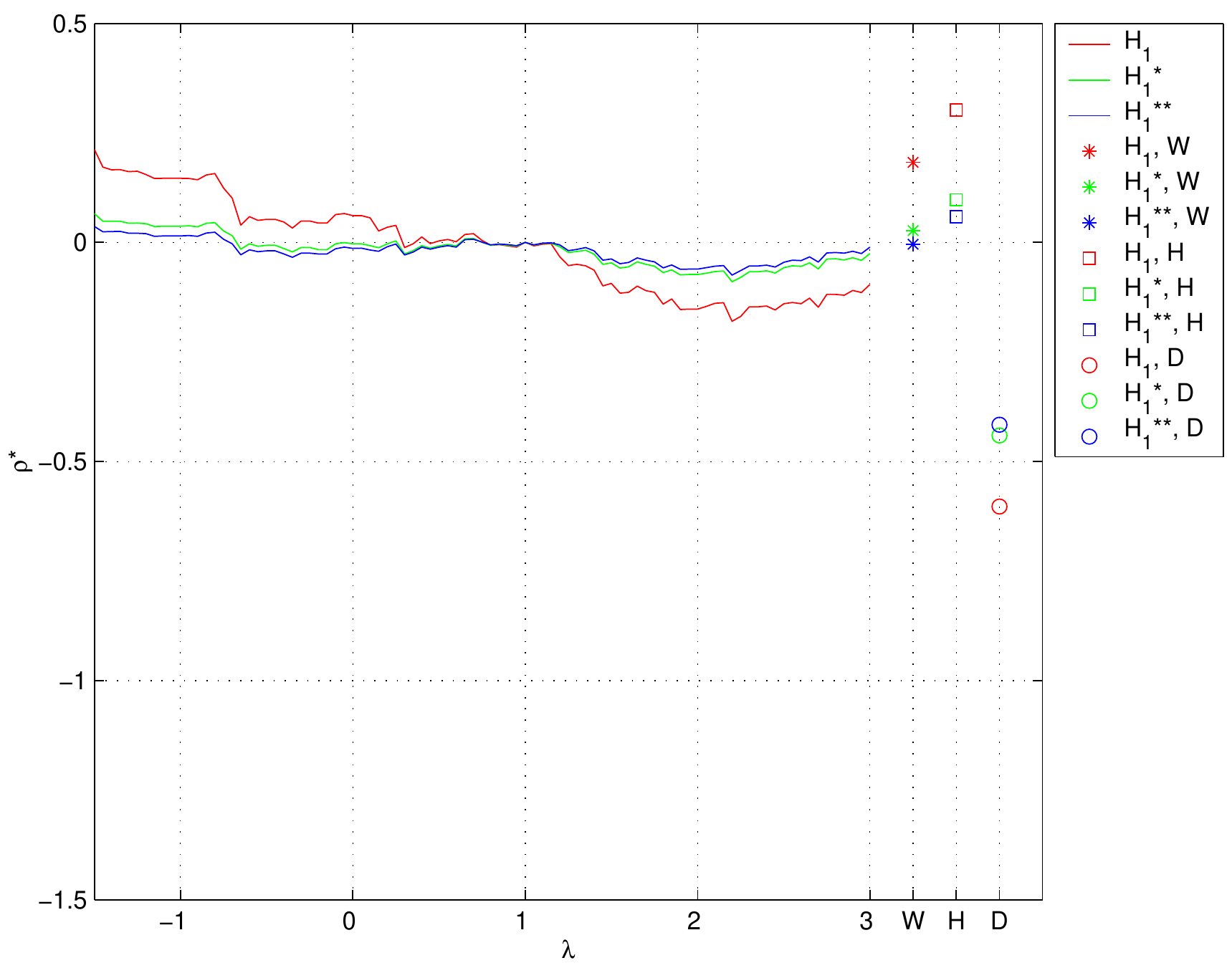}%
}
\\
\multicolumn{2}{c}{sc C-0 (black), sc C-1 (red), sc C-2 (green), sc C-3
(blue)}\\%
{\includegraphics[
height=2.2943in,
width=2.9144in
]%
{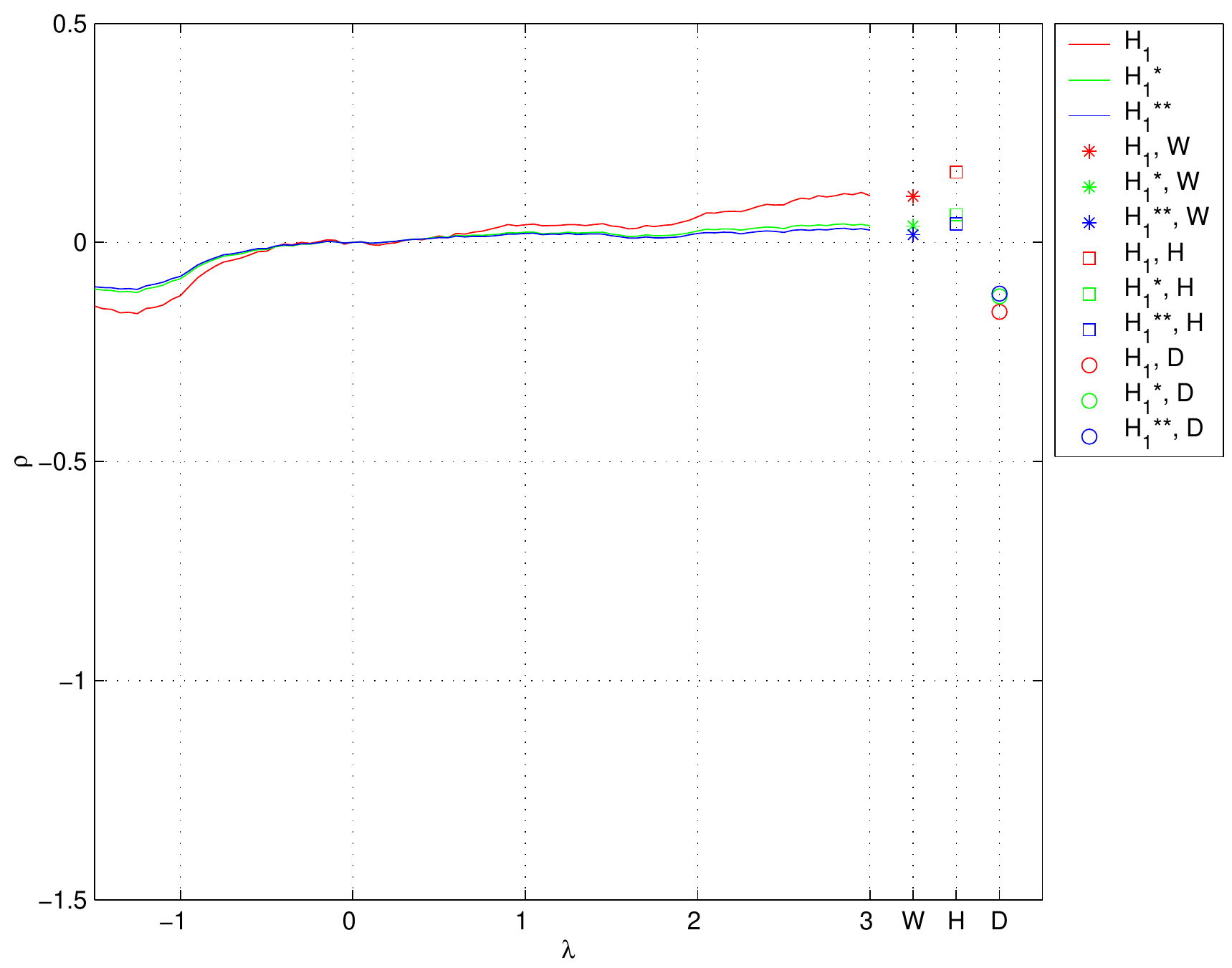}%
}
&
{\includegraphics[
height=2.2943in,
width=2.9144in
]%
{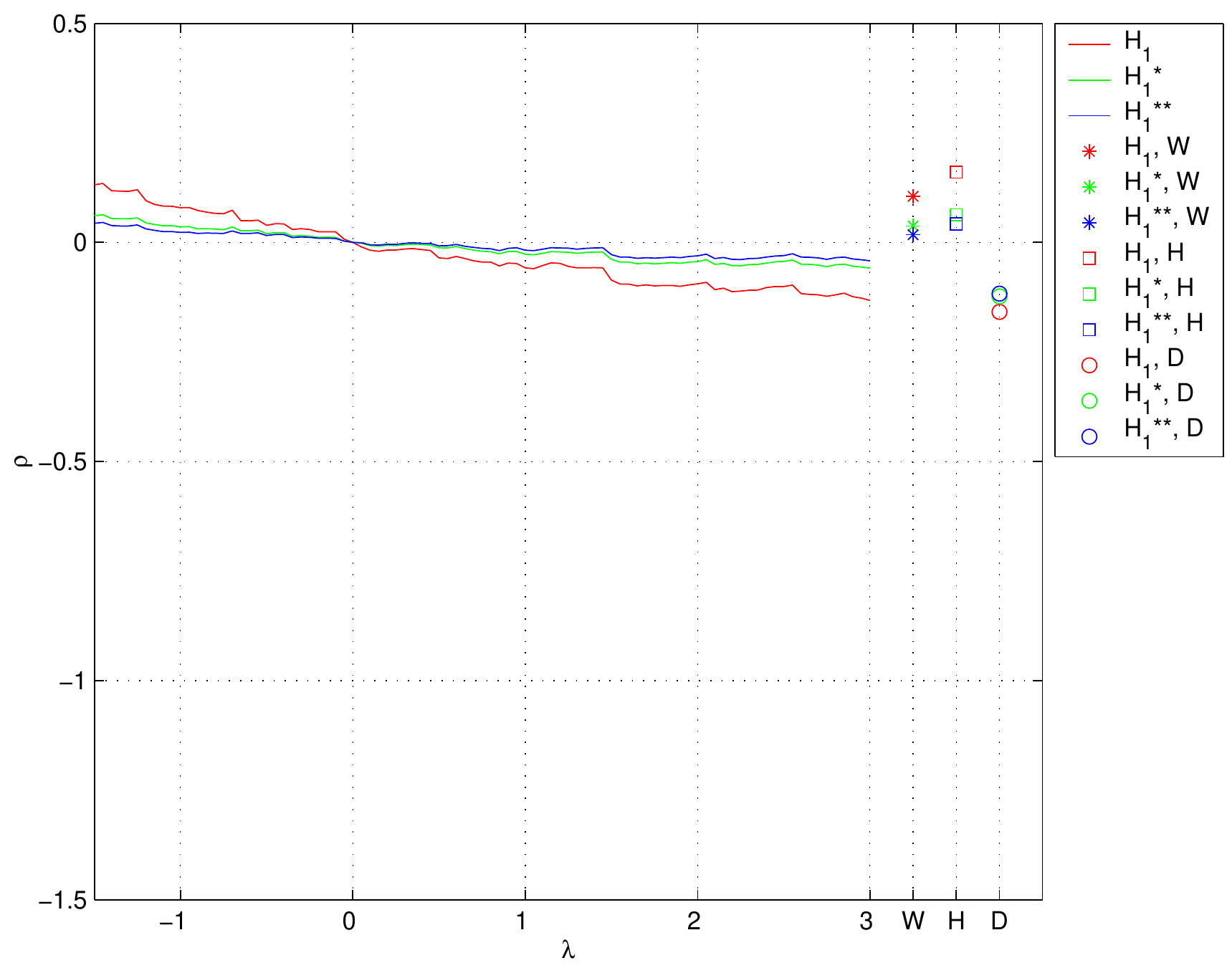}%
}
\end{tabular}
\caption{Efficiencies for scenarios A,B,C (small/big proportions).\label{fig3}}%
\end{figure}%
%

\begin{figure}[htbp]  \centering
\begin{tabular}
[c]{cc}%
$\{T_{\lambda}(\overline{\boldsymbol{p}},\boldsymbol{p}%
(\widetilde{\boldsymbol{\theta}}),\boldsymbol{p}(\widehat{\boldsymbol{\theta}%
}))\}_{\lambda\in(-1.5,3)}$ & $\{S_{\lambda}(\boldsymbol{p}%
(\widetilde{\boldsymbol{\theta}}),\boldsymbol{p}(\widehat{\boldsymbol{\theta}%
}))\}_{\lambda\in(-1.5,3)}$\\
\multicolumn{2}{c}{$W(\widetilde{\boldsymbol{\theta}}%
,\widehat{\boldsymbol{\theta}})$, $H(\widetilde{\boldsymbol{\theta}%
},\widehat{\boldsymbol{\theta}})$, $D(\overline{\boldsymbol{\theta}%
},\widetilde{\boldsymbol{\theta}},\widehat{\boldsymbol{\theta}})$}\\
\multicolumn{2}{c}{sc C-0 (black), sc C-1 (red), sc C-2 (green), sc C-3
(blue)}\\%
\raisebox{-0.0104in}{\includegraphics[
height=2.2943in,
width=2.9144in
]%
{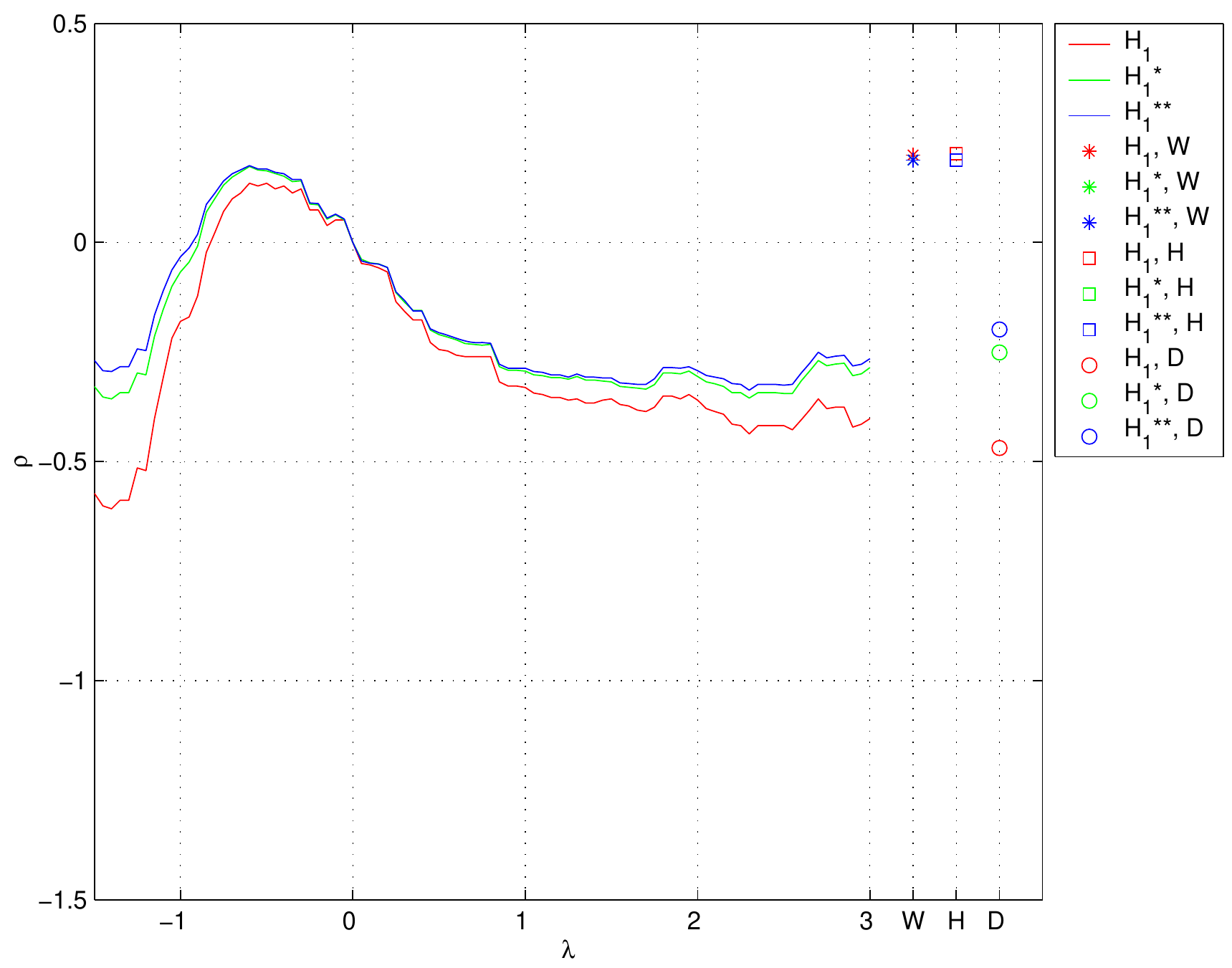}%
}
&
{\includegraphics[
height=2.2943in,
width=2.9144in
]%
{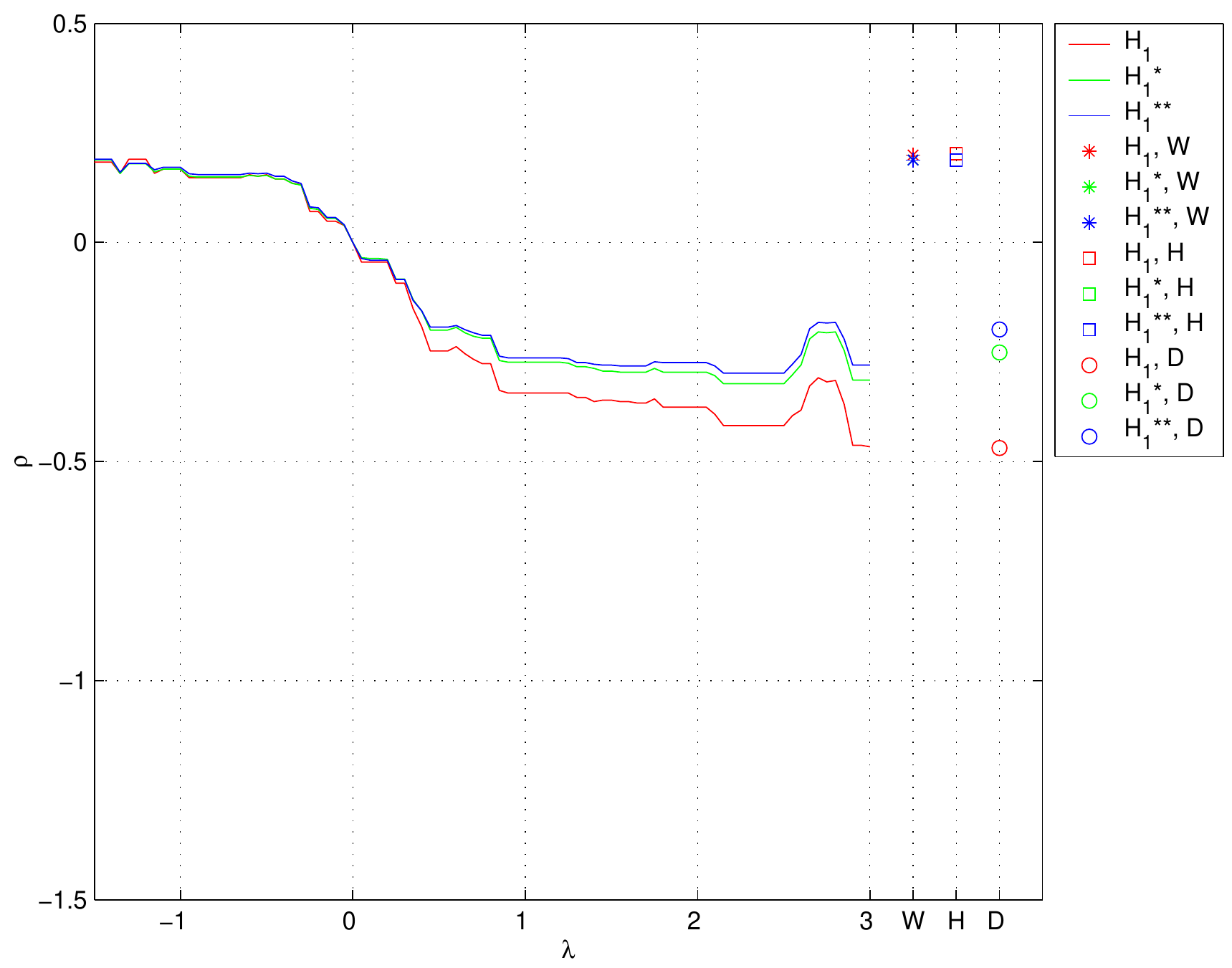}%
}
\\
\multicolumn{2}{c}{sc D-0 (black), sc D-1 (red), sc D-2 (green), sc D-3
(blue)}\\%
{\includegraphics[
height=2.2943in,
width=2.9144in
]%
{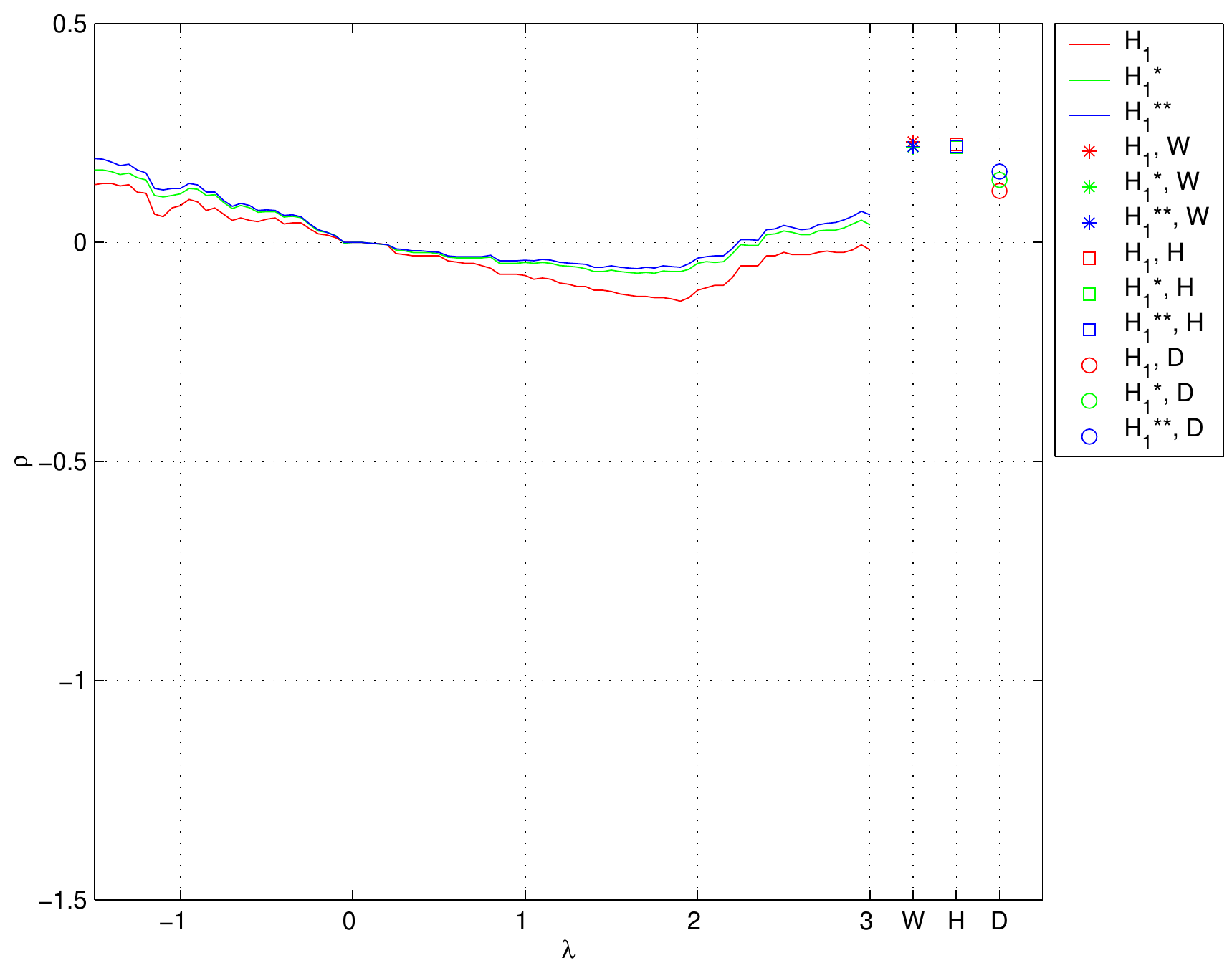}%
}
&
{\includegraphics[
height=2.2943in,
width=2.9144in
]%
{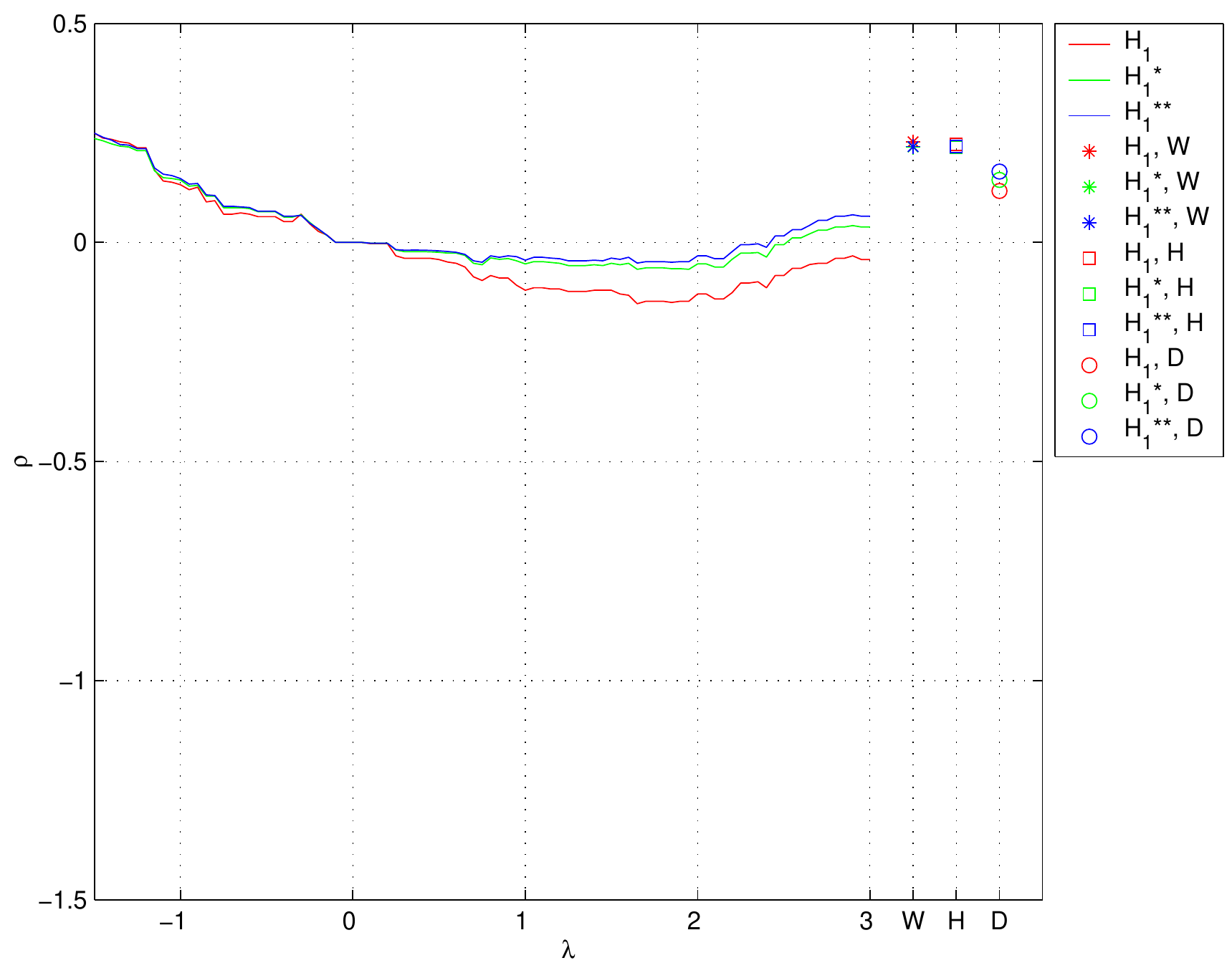}%
}
\\
\multicolumn{2}{c}{sc E-0 (black), sc E-1 (red), sc E-2 (green), sc E-3
(blue)}\\%
{\includegraphics[
height=2.2943in,
width=2.9144in
]%
{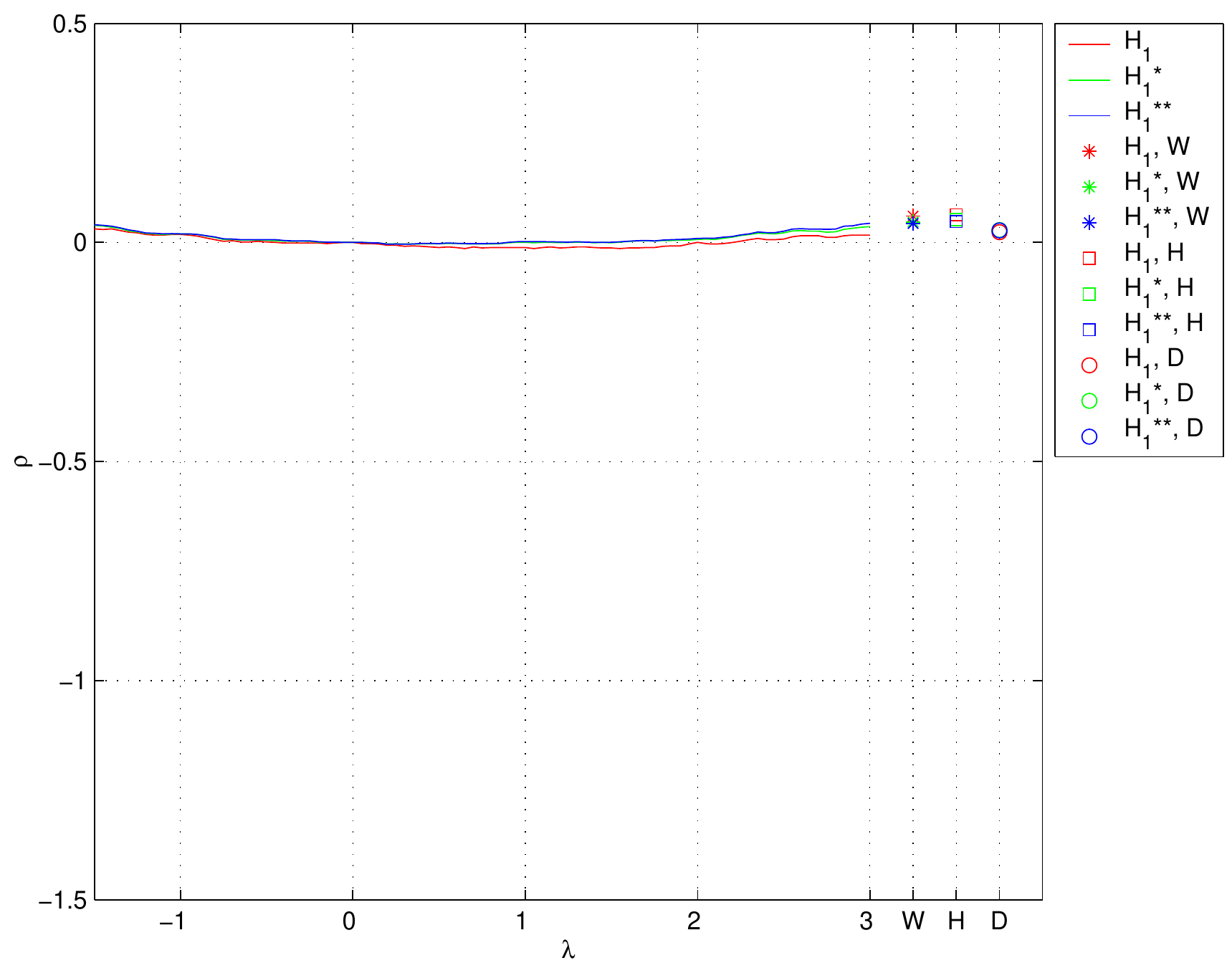}%
}
&
{\includegraphics[
height=2.2943in,
width=2.9144in
]%
{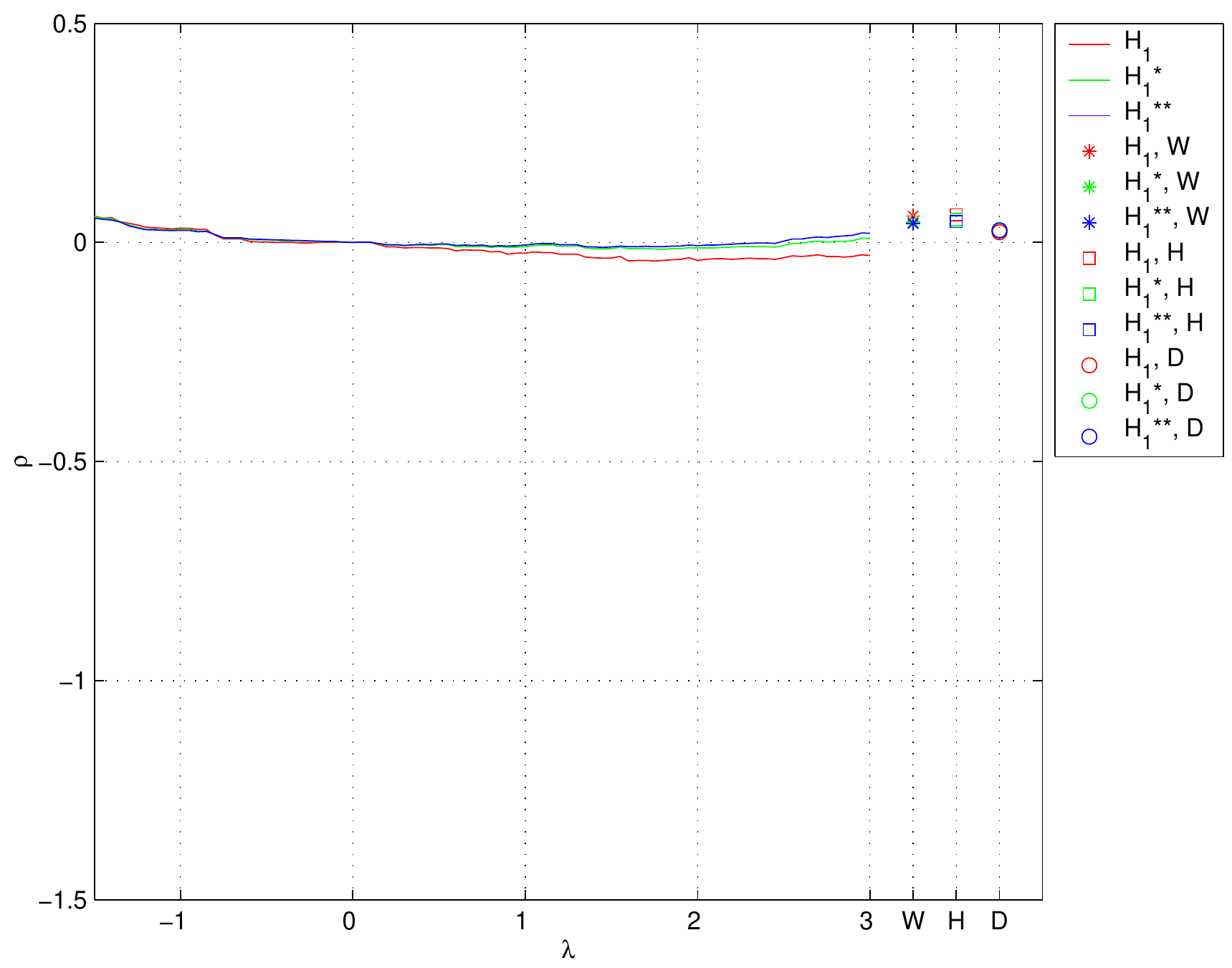}%
}
\end{tabular}
\caption{Efficiencies for scenarios C,D,E (intermediate proportions).\label{fig4}}%
\end{figure}%

In view of the plots, it is possible to propose test-statistics with better
performance in comparison with $G^{2}$ and $X^{2}$. From Figures \ref{fig1}
and \ref{fig3}, the so-called Cressie-Read test-statistic, $T_{\frac{2}{3}%
}(\overline{\boldsymbol{p}},\boldsymbol{p}(\widetilde{\boldsymbol{\theta}%
}),\boldsymbol{p}(\widehat{\boldsymbol{\theta}}))$, can be recommended for
small/big proportions either for small or moderate sample sizes. On the other
hand, $W(\widetilde{\boldsymbol{\theta}},\widehat{\boldsymbol{\theta}})$,
$H(\widetilde{\boldsymbol{\theta}},\widehat{\boldsymbol{\theta}})$ and the
test-statistic based on the Hellinger distance, $T_{-0.5}(\overline
{\boldsymbol{p}},\boldsymbol{p}(\widetilde{\boldsymbol{\theta}}%
),\boldsymbol{p}(\widehat{\boldsymbol{\theta}}))$, can be recommended for
intermediate proportions and moderate sample sizes, however for small sample
sizes the likelihood ratio test-statistic still remains being the best
one.\newpage

\end{document}